\documentclass[sigconf]{acmart}

\usepackage{amsmath}
\usepackage{booktabs,url}
\usepackage[ruled,lined,boxed,linesnumbered]{algorithm2e}
\captionsetup{compatibility=false}

\def \vd{{\mathbf d}}
\def \ve{{\mathbf e}}
\def \vf{{\mathbf f}}

\def \vg{{\mathbf g}}

\def \vr{{\mathbf r}}
\def \vs{{\mathbf s}}
\def \vu{{\mathbf u}}
\def \vv{{\mathbf v}}
\def \vw{{\mathbf w}}
\def \vx{{\mathbf x}}
\def \vy{{\mathbf y}}
\def \vz{{\mathbf z}}

\def \vB{{\mathbf B}}

\def \vD{{\mathbf D}}

\def \vW{{\mathbf W}}

\def \setA{\mathcal{A}}
\def \setB{\mathcal{B}}

\def \setP{\mathcal{P}}
\def \setG{\mathcal{G}}
\def \setH{\mathcal{H}}

\def \setS{\mathcal{S}}
\def \setV{\mathcal{V}}

\def \setE{\mathcal{E}}

\AtBeginDocument{%
  \providecommand\BibTeX{{%
    \normalfont B\kern-0.5em{\scshape i\kern-0.25em b}\kern-0.8em\TeX}}}

\copyrightyear{2021}
\acmYear{2021}
\setcopyright{acmcopyright}\acmConference[WSDM '21]{Proceedings of the
Fourteenth ACM International Conference on Web Search and Data Mining}{March
8--12, 2021}{Virtual Event, Israel}
\acmBooktitle{Proceedings of the Fourteenth ACM International Conference on Web
Search and Data Mining (WSDM '21), March 8--12, 2021, Virtual Event, Israel}
\acmPrice{15.00}
\acmDOI{10.1145/3437963.3441756}
\acmISBN{978-1-4503-8297-7/21/03}

\settopmatter{printacmref=true}
\begin{document}
\fancyhead{}

\title{Exploring the Subgraph Density-Size Trade-off\\ via the Lov\'asz Extension}

\author{Aritra Konar}
\affiliation{%
  \institution{University of Virginia}
  \city{Charlottesville}
  \state{Virginia}
  \country{USA}
}
\email{aritra@virginia.edu}

\author{Nicholas D. Sidiropoulos}
\affiliation{%
  \institution{University of Virginia}
  \city{Charlottesville}
  \state{Virginia}
  \country{USA}
}
\email{nikos@virginia.edu}

\begin{abstract}
Given an undirected graph, the \textsc{Densest}-$k$-\textsc{Subgraph} problem (DkS) seeks to find a subset of $k$ vertices such that the sum of the edge weights in the corresponding subgraph is maximized. The problem is known to be NP-hard, and is also very difficult to approximate, in the worst-case. In this paper, we present a new convex relaxation for the problem. Our key idea is to reformulate DkS as minimizing a submodular function subject to a cardinality constraint. Exploiting the fact that submodular functions possess a convex, continuous extension (known as the Lov\'asz extension), we propose to minimize the Lov\'asz extension over the convex hull of the cardinality constraints. Although the Lov\'asz extension of a submodular function does not admit an analytical form in general, for DkS we show that it does. We leverage this result to develop a highly scalable algorithm based on the Alternating Direction Method of Multipliers (ADMM) for solving the relaxed problem. Coupled with a pair of fortuitously simple rounding schemes, we demonstrate that our approach outperforms existing baselines on real-world graphs and can yield high quality sub-optimal solutions which typically are \emph{a posteriori} no worse than $65-80\%$  of the optimal density.
\end{abstract}

\begin{CCSXML}
<ccs2012>
<concept>
<concept_id>10002950.10003624.10003633.10010917</concept_id>
<concept_desc>Mathematics of computing~Graph algorithms</concept_desc>
<concept_significance>500</concept_significance>
</concept>
<concept>
<concept_id>10002950.10003714.10003716.10011138</concept_id>
<concept_desc>Mathematics of computing~Continuous optimization</concept_desc>
<concept_significance>500</concept_significance>
</concept>
<concept>
<concept_id>10002950.10003714.10003716.10011141.10010040</concept_id>
<concept_desc>Mathematics of computing~Submodular optimization and polymatroids</concept_desc>
<concept_significance>500</concept_significance>
</concept>
</ccs2012>
\end{CCSXML}

\ccsdesc[500]{Mathematics of computing~Graph algorithms}
\ccsdesc[500]{Mathematics of computing~Continuous optimization}
\ccsdesc[500]{Mathematics of computing~Submodular optimization and polymatroids}

\keywords{Dense subgraphs; submodularity; Lov\'asz extension; convex optimization; Alternating Direction Method of Multipliers}

\maketitle

\section{Introduction}
\noindent \textbf{Motivation and Overview:}
Dense subgraph discovery is a key primitive in graph mining that finds application in diverse disciplines ranging from computational biology \cite{saha2010dense}, chemical informatics \cite{podolyan2009common}, network science \cite{chen2010dense,zhang2012extracting,giatsidis2014corecluster} and fraud detection \cite{hooi2016fraudar,zhang2017hidden}. Given an unweighted, undirected graph on $n$ vertices, the classical \textsc{DensestSubgraph} problem \cite{goldberg1984finding} seeks to determine the subgraph with the largest average degree. The problem can be solved exactly in polynomial-time and approximately (but quasi-optimally) via a greedy algorithm \cite{charikar2000greedy}. Recent work has extended these ideas to take into account higher-order structure in graphs \cite{tsourakakis2015k,mitzenmacher2015scalable}.

A drawback of the aforementioned approaches is that they do not feature a means of explicitly controlling the size of the desired subgraph. Hence, if one is interested in computing the densest subgraph as a function of the  size $k$ with the aim of exploring the optimal density-size trade-off, an additional cardinality constraint on the subgraph size has to be imposed in the formulation of \textsc{DensestSubgraph}. Unfortunately, this simple modification renders the resulting problem, known as \textsc{Densest}-$k$-\textsc{Subgraph} (DkS), NP-hard. Furthermore, the problem is known to be notoriously difficult to approximate in the worst-case \cite{khot2006ruling,bhaskara2012polynomial,manurangsi2017almost}. 

\noindent \textbf{Prior Art:} The state-of-the-art approximation algorithm \cite{bhaskara2010detecting} for the DkS problem provides a worst-case approximation guarantee of $O(n^{1/4 + \epsilon})$ (for some $\epsilon >0$) in time $n^{O(1/\epsilon)}$ for every choice of $k$, which is a very pessimistic result in general. Restricted cases of the problem are known to enjoy better approximation guarantees. For dense graphs, where the number of edges $m = \Omega(n^2)$ and for linear subgraph sizes $k = \Omega(n)$, a $1+\epsilon$ approximation algorithm was presented in \cite{arora1999polynomial}. However, the result has limited implications for real-world networks since they are sparse in edges (with $m = O(n)$) \cite{watts1998collective}. For general sizes $k$, a $O(n/k)$ approximation can be achieved by applying a greedy algorithm \cite{feige2001dense} or via semidefinite relaxation \cite{srivastav1998finding,feige2001approximation}. Note that in the linear size regime $k = \Omega(n)$, this yields a constant-factor approximation. That being said, in practice, for large graphs one is more interested in the \emph{sublinear} size regime $k = o(n)$, where the bounds again become very pessimistic. Recently, a new semidefinite relaxation approach for DkS has been proposed in \cite{bombina2020convex} that guarantees exact recovery in planted dense subgraph models with high probability. However, real world graphs are not known to obey such synthetic models. Additionally, the high complexity incurred in solving the semidefinite program
is a limitation of the approach. 

In a departure from such worst-case results, the recent work of \cite{papailiopoulos2014finding} approaches the problem via the lens of low-rank matrix factorization.
Specifically, it is shown that if the graph adjacency matrix has constant rank (in $n$), then the DkS problem is solvable in polynomial-time. When the adjacency matrix is not constant rank, solving the problem using low-rank approximation still yields an \emph{a posteriori} graph-dependent upper bound on the optimal density for a given $k$. Through experiments on real-world graphs, it is shown that the approach yields high-quality solutions that can come close to attaining the upper bound in certain cases.

\noindent \textbf{Approach and Contributions:} In this paper, we propose a new convex relaxation of the DkS problem for obtaining high-quality, sub-optimal solutions. 
Our contributions can be summarized as follows:
\begin{itemize}
    \item We reformulate the DkS problem as minimizing a \emph{submodular} function  subject to a cardinality constraint. Leveraging the fact that submodular functions are endowed with a convex, continuous extension (i.e., the Lov\'asz extension), we devise a new convex formulation for DkS that minimizes the Lov\'asz extension over the convex hull of the cardinality constraints.
    \item In general, the Lov\'asz extension of a submodular function does not admit an analytical form. In this case however, by judiciously exploiting the structure inherent in the problem, we establish a simple closed-form expression for the Lov\'asz extension. We utilize this result to develop an efficient and scalable algorithm for solving the convex relaxation via an inexact variant of the popular Alternating Direction Method of Multipliers (ADMM) \cite{lions1979splitting,boyd2011distributed}, which features computationally lightweight updates and guaranteed convergence.
    \item The solution of our relaxed problem is not guaranteed to be integral in general. Hence, we perform post-processing via two simple rounding schemes to obtain final integral solutions for DkS. While we do not possess \emph{a priori} guarantees on the quality of the obtained solution at present, via experiments on real-world graphs we demonstrate that our approach can consistently outperform prominent baselines. In fact, utilizing the upper bound on the optimal edge-density developed in \cite{papailiopoulos2014finding}, we demonstrate that \emph{a posteriori} our approach can discover dense subgraphs that are typically no worse than $65-80\%$ of the optimal density. 
\end{itemize}
On a final note, to put our contributions into perspective, we note that the prevailing approach to convex relaxation for combinatorial quadratic programming problems has been semidefinite relaxation \cite{luo2010semidefinite}, which is the Lagrangian bi-dual of the original problem, and hence is the closest convex problem to DkS, in a certain sense.  Since the Lov\'asz extension is the convex envelope of a submodular function \cite{lovasz1983submodular}, our convex relaxation can be viewed as an alternative which is the closest convex problem to DkS in a different sense (this notion is made precise in Section 4). 

\section{Primer on Submodularity}
We provide an overview of basic concepts regarding submodular functions \cite{lovasz1983submodular,fujishige2005submodular,bach2013learning}.
Given a set of $n$ objects $\setV = \{1,\cdots,n\}$, a \emph{set function} $F:2^{\mathcal{V}} \rightarrow \mathbb{R}$ assigns a real value to any subset $\mathcal{S} \subseteq \mathcal{V}$. 
\newline
\noindent\textbf{Definition 1. [Submodularity]} A set function $F(.)$ is said to be \emph{submodular} if and only if for all subsets $\mathcal{A},\mathcal{B} \subseteq \mathcal{V}$, it holds that
\begin{equation}\label{eq:SubModDef}
 F(\mathcal{A} \cup \mathcal{B}) + F(\mathcal{A} \cap \mathcal{B})
 \leq 
 F(\mathcal{A})+F(\mathcal{B}).
\end{equation}
The above definition can be equivalently, and more conveniently, restated in the following form.
\newline
\noindent\textbf{Definition 2.} For all $\mathcal{A}\subseteq\mathcal{B}\subseteq \mathcal{V}\setminus \{v\}$, it holds that
\begin{equation}\label{eq:SubModDef2}
F(\mathcal{A} \cup \{v\})-F(\mathcal{A}) 
\geq F(\mathcal{B} \cup \{v\})-F(\mathcal{B}).
\end{equation}
That is, for such functions, given subsets $\setA \subseteq \setB \subseteq \setV \setminus \{v\}$, the marginal improvement obtained by adding an element $v$ to the larger set  $\mathcal{B}$ never exceeds that obtained by adding $v$ to its subset $\mathcal{A}$.  Simply stated, equation \eqref{eq:SubModDef2} asserts that submodular functions exhibit a diminishing returns property.
\newline
\noindent\textbf{Definition 3. [The Lov\'asz extension]}
A remarkable feature of submodular functions is that they possess a continuous, \emph{convex} extension known as the Lov\'asz extension, which extends their domain from $2^{\setV}$ to the unit interval $[0,1]^n$ (recall $n=|\setV|$). Formally, the Lov\'asz extension $f_L:[0,1]^n \rightarrow \mathbb{R}$ of a submodular function $F(.)$ is defined as
\begin{equation}\label{eq:lovasz}
    f_L(\vx) := \underset{\vg \in \setB_F}{\max} \; \vg^T\vx,
\end{equation}
where the set $\setB_F$ is the \emph{base polytope} associated with $F(.)$ and is defined as 
\begin{equation}\label{eq:basePolytope}
    \setB_F:= \{\vg \in \mathbb{R}^n : \vg^T\mathbf{1}_n = F(\setV); \vg^T\mathbf{1}_{\setS} \leq F(\setS), \forall \; \setS  \subseteq \setV\}.
\end{equation}
From equation \eqref{eq:lovasz}, it is evident that the Lov\'asz extension corresponds to the support function of the base polytope $\setB_F$, and is thus convex. In fact, it can be shown that $f_L(.)$ is convex if and only if $F(.)$ is submodular. Furthermore, when evaluated at a binary vector $\vx \in \{0,1\}^n$, the Lov\'asz extension equals the value of the submodular function $F(.)$.

\section{Problem Statement}
In this section, we formally describe the \textsc{Densest}-$k$-\textsc{Subgraph} (DkS) problem.
Consider a weighted, undirected, simple graph  $\setG := (\setV,\setE,w)$ on $n$ vertices, with vertex set $\setV: = \{1,\cdots,n\}$ and edge set $\setE \subseteq \setV \times \setV$ consisting of $m:=|\setE|$ edges. The function $w:\setE \rightarrow \mathbb{R}_{++}$ assigns each edge with a positive weight, and we collect these weights in a vector $\vw \in \mathbb{R}^m_{++}$. In the special case that $\vw$ is the vector of all-ones, we say that the graph $\setG$ is unweighted.

Given a positive integer $1 < k < n$, we consider the problem of computing the subset of vertices $\setS \subset \setV$ of size $k$ such that the sum of the edge weights in the induced subgraph $\setG_{\setS}$ is as larges as possible. The DkS problem can be expressed in quadratic programming form as 
\begin{equation}\label{eq:DKS}
\begin{aligned}
& \underset{\vx \in \{0,1\}^n}{\text{max}}
& &\vx^T\vW\vx \\
& \quad \text{s.to}& & \mathbf{1}^T\vx = k,
\end{aligned}
\end{equation}
where $\vW$ represents the $n \times n$ (weighted) adjacency matrix of the graph $\setG$. Note that each binary vector $\vx \in \{0,1\}^n$  corresponds to the indicator vector of a vertex subset $\setS \subseteq \setV$ , i.e., we have
\begin{equation}
x_i = 
\begin{cases}
1, \;\text{if} \; i \in \setS \\
0, \;\text{otherwise}. 
\end{cases}
\end{equation}
Hence, for a given subset of vertices $\setS \subset \setV$, the objective function counts the total weight of the edges in the subgraph $\setG_{\setS}$ induced by $\setS$, while the constraints ensure that $\setS$ contains precisely $k$ vertices. 

Regarding computational complexity, problem \eqref{eq:DKS} is known to be NP--hard in its general form (it contains the \textsc{MaximumClique} problem as a special case \cite{feige2001dense}). Additionally, the problem also has a documented history of resistance to efficient approximation in polynomial-time \cite{khot2006ruling,bhaskara2012polynomial,manurangsi2017almost}. Notwithstanding such pessimistic worst-case results, in this paper we devise a new polynomial-time approximation algorithm for the DkS problem that relies on exploiting the combinatorial structure of \eqref{eq:DKS} in a principled manner. Our approach is outlined in the following section.

\section{Proposed Approach}

Consider the following equivalent reformulation of problem \eqref{eq:DKS} in subset selection form
\begin{equation}
\underset{|\setS| = k }{\min} \biggl\{F(\setS): -\mathbf{1}_{\setS}^T\vW\mathbf{1}_{\setS} \biggr\},
\end{equation}
where $\mathbf{1}_{\setS}$ denotes the binary indicator vector of subset $\setS \subset \setV$. We now make the following crucial observation regarding the cost function.
\begin{theorem}
The cost function $F(.)$ is submodular.
\end{theorem}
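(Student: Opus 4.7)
My plan is to verify submodularity via the diminishing-returns characterization of Definition 2, which for the quadratic function $F(\setS) = -\mathbf{1}_\setS^T \vW \mathbf{1}_\setS$ reduces to a one-line marginal computation. First I would fix any $\setA \subseteq \setV \setminus \{v\}$ and expand $\mathbf{1}_{\setA \cup \{v\}}^T \vW \mathbf{1}_{\setA \cup \{v\}}$ to isolate the contribution of the newly added index $v$.

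Using the symmetry of $\vW$ together with the fact that $W_{vv} = 0$ (since $\setG$ is simple and hence has no self-loops), the only new terms appearing in the expansion are the cross-terms $2 \sum_{i \in \setA} W_{vi}$. Consequently the marginal admits the closed form
\[
F(\setA \cup \{v\}) - F(\setA) \;=\; -2 \sum_{i \in \setA} W_{vi},
\]
i.e.\ the negated total weight of edges from $v$ into $\setA$.

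Given any $\setA \subseteq \setB \subseteq \setV \setminus \{v\}$, I would then observe that every edge incident to $v$ with other endpoint in $\setA$ is also an edge incident to $v$ with other endpoint in $\setB$. Since the entries of $\vw$ are positive by assumption, this containment implies the monotonicity $\sum_{i \in \setA} W_{vi} \leq \sum_{i \in \setB} W_{vi}$. Negating and substituting into the marginal formula yields
\[
F(\setA \cup \{v\}) - F(\setA) \;\geq\; F(\setB \cup \{v\}) - F(\setB),
\]
which is precisely the diminishing-returns inequality \eqref{eq:SubModDef2}, so $F$ is submodular.

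There is no real technical obstacle here; the argument is essentially arithmetic. What is worth flagging is that the sign reversal (working with $-\mathbf{1}_\setS^T \vW \mathbf{1}_\setS$ rather than $+\mathbf{1}_\setS^T \vW \mathbf{1}_\setS$) together with non-negativity of the edge weights are both indispensable: the original quadratic $\mathbf{1}_\setS^T \vW \mathbf{1}_\setS$ is in fact \emph{super}modular on graphs with non-negative weights, and the reformulation exploits precisely this fact in order to cast DkS as submodular minimization subject to a cardinality constraint.
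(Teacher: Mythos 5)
Your proof is correct, but it follows a different route from the paper's. You verify the diminishing-returns characterization \eqref{eq:SubModDef2} directly on the quadratic form: expanding $-\mathbf{1}_{\setA \cup \{v\}}^T\vW\mathbf{1}_{\setA \cup \{v\}}$ gives the marginal $-2\sum_{i \in \setA} W_{vi}$ (using symmetry and $W_{vv}=0$), and non-negativity of the weights makes this marginal non-increasing as the set grows. The paper instead decomposes $F$ edge-wise as $F(\setS) = \sum_{(i,j)\in\setE_{\setS}} -w_{ij}$, invokes closure of submodularity under summation, and checks Definition 1 for each two-element indicator function $F_{ij}$ in \eqref{eq:littleF}. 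Both arguments are elementary and hinge on the same two facts you correctly flag as indispensable (the sign flip and $\vw > \mathbf{0}$), but they buy different things: your direct marginal computation is self-contained, arguably more rigorous than the paper's rather terse final step, and makes the supermodularity of $+\mathbf{1}_{\setS}^T\vW\mathbf{1}_{\setS}$ transparent; the paper's edge-wise decomposition is chosen because it is reused as the engine of Theorem 2, where the base polytope $\setB_F$ is built as the Minkowski sum of the constituent polytopes $\setB_{F_{ij}}$, so establishing submodularity via that decomposition sets up the machinery for the main structural result. No gap in your argument.
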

\begin{proof}
Note that for a given subset $\setS$, the cost function is linearly separable over the edge set $\setE_{\setS}$ of the induced subgraph $\setG_{\setS}$, i.e., we have
\begin{equation}
    F(\setS) = \sum_{(i,j) \in \setE_{\setS}}-w_{ij},
\end{equation}
where $w_{ij}$ denotes the weight of edge $(i,j) \in \setE_{\setS}$. Since submodularity is preserved under summation, in order to obtain the desired result, it suffices to show that each constituent function
\begin{equation}\label{eq:littleF}
F_{ij}(\setS) := 
\begin{cases}
-w_{ij}, & \;\text{if} \; i \;\text{and}\; j \in \setS, \\
\quad 0, & \;\text{otherwise}, 
\end{cases}
\end{equation}
is submodular. Defining the pair of sets $\setA := \setS \cap \{i\}, \setB := \setS \cap \{j\}$  and applying Definition $1$ then completes the proof.
\end{proof}

Although the above observation does not make the (NP--hard) DkS problem any easier to solve, it does open the door to the following approximation approach. First, we define the set 
\begin{equation}
 \setP: = \{ \vx \in [0,1]^n; \mathbf{1}^T\vx = k\}  
\end{equation}
to be the convex hull of the combinatorial sum-to-$k$ constraints. 
Since submodular functions are endowed with a convex, continuous extension (the Lov\'asz extension) which equals the value of $F(.)$ at all binary $\{0,1\}^n$ vectors, the DkS problem can be equivalently expressed as 
\begin{equation} \label{eq:DKS2}
 \begin{aligned}
& \quad {\text{min}}
& & f_L(\vx) \\
& \quad \text{s.to}& &\vx \in  \{0,1\}^n \cap \setP.
\end{aligned}   
\end{equation}
On dropping the combinatorial constraints, we obtain the relaxed problem
\begin{equation} \label{eq:Lrelax}
\underset{\vx \in \setP}{\text{min}}\;\; f_L(\vx)
\end{equation}
which we refer to as the Lov\'asz relaxation. Clearly, the above problem is convex, and can be solved in polynomial-time to obtain a lower bound on the optimal value of \eqref{eq:DKS2}.

We now outline our primary motivation for employing the Lov\'asz extension. Before proceeding, we recall a few basics of convex analysis \cite{rockafellar1970convex}. Given any function $f:\mathbb{R}^n \rightarrow \mathbb{R} \cup \{+\infty\} $, its \emph{Fenchel conjugate} is defined as $f^*(\vy) := \sup_{\vx} \{\vy^T\vx - f(\vx)\}$, which is always closed and  convex (even if $f(.)$ is not). Taking the conjugate of $f^*(.)$ yields the \emph{biconjugate} $f^{**}(.)$ of the function $f(.)$, which is also closed and convex, and an under-estimator of $f(.)$, i.e., $f^{**} \leq f$.  As a matter of fact, the biconjugate $f^{**}(.)$ constitutes the convex closure of $f(.)$, and thus, is the tightest convex under-estimator of $f(.)$ (in a certain sense). The link between the Lov\'asz extension of a submodular function and its Fenchel biconjugate is provided by the following result, which is extracted from \cite{lovasz1983submodular,bach2013learning}.
\begin{lemma}
Given a subodular function $F(.)$, define the function 
\begin{equation}
g(\vx): = 
	\begin{cases}
	F(\setS), &\forall\; \vx = \mathbf{1}_{\setS}, \setS \subseteq 2^{\setV}\\
	+\infty, &\forall\; \vx \neq \{0,1\}^n.
	\end{cases} 
\end{equation}
Then, the Fenchel biconjugate of $g(.)$ is the Lov\'asz extension of $F(.)$.
\end{lemma}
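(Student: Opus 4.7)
The plan is to invoke the classical characterization stated in the excerpt that the biconjugate $g^{**}$ is the tightest convex under-estimator (the convex closure) of $g$. Since $g(\vx) = +\infty$ off the hypercube $\{0,1\}^n$, the constraint $g^{**} \leq g$ collapses to the finitely many inequalities $g^{**}(\mathbf{1}_\setS) \leq F(\setS)$ for $\setS \subseteq \setV$. The proof then decomposes into (a) showing that the Lov\'asz extension $f_L$ is itself a convex under-estimator of $g$, and (b) showing that no strictly larger convex under-estimator exists, whence $g^{**} = f_L$.

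For (a), the primer already supplies the ingredients: $f_L$ is convex as the support function of $\setB_F$ via equation \eqref{eq:lovasz}, and it equals $F(\setS)$ at every hypercube vertex $\mathbf{1}_\setS$. Hence $f_L(\vx) \leq g(\vx)$ pointwise, yielding $f_L \leq g^{**}$ at once.

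For (b), I would deploy the chain decomposition that underlies Edmonds' greedy algorithm on submodular polyhedra. Given any $\vx \in [0,1]^n$, sort the coordinates $x_{\sigma(1)} \geq \cdots \geq x_{\sigma(n)}$ and form the chain $\setS_k := \{\sigma(1),\dots,\sigma(k)\}$ with $\setS_0 := \emptyset$. A routine calculation then shows
\[
\vx = (1 - x_{\sigma(1)})\mathbf{1}_{\setS_0} + \sum_{k=1}^{n-1}(x_{\sigma(k)} - x_{\sigma(k+1)})\mathbf{1}_{\setS_k} + x_{\sigma(n)}\mathbf{1}_{\setS_n},
\]
which is a bona fide convex combination of hypercube vertices. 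Applying any convex minorant $\phi$ with $\phi(\mathbf{1}_\setS) \leq F(\setS)$ and invoking Jensen's inequality yields $\phi(\vx) \leq \sum_{k=1}^{n}(x_{\sigma(k)} - x_{\sigma(k+1)}) F(\setS_k)$. A summation by parts recasts this bound as $\vg_\sigma^T \vx$, where $g_{\sigma(k)} := F(\setS_k) - F(\setS_{k-1})$ is the marginal gain along the chain. Edmonds' theorem certifies that $\vg_\sigma$ is a vertex of $\setB_F$ attaining the support-function maximum in \eqref{eq:lovasz} for precisely this ordering, so $\phi(\vx) \leq f_L(\vx)$, delivering $g^{**} \leq f_L$ on $[0,1]^n$.

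The main obstacle I anticipate is executing (b) cleanly: setting up the chain decomposition and identifying the resulting upper bound with the greedy vertex of $\setB_F$ through Edmonds' theorem. A secondary bookkeeping issue is the domain mismatch, since $f_L$ is stated on $[0,1]^n$ while $g^{**}$ a priori lives on all of $\mathbb{R}^n$. I would dispatch this by noting that for $\vx \notin [0,1]^n$ one can send a suitable coordinate of $\vy$ to $\pm\infty$ in $\sup_\vy[\vy^T\vx - g^*(\vy)]$ to drive the objective to $+\infty$, so $g^{**} \equiv +\infty$ off $[0,1]^n$ and the two functions coincide on their common effective domain.
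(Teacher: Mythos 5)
Your argument is correct, but it is worth noting that the paper does not actually prove this lemma: it is stated as a known result and attributed to Lov\'asz and to Bach, so there is no in-paper proof to compare against. What you have reconstructed is essentially the standard textbook derivation (cf.\ Bach's monograph): the easy inequality $f_L \leq g^{**}$ from the fact that $f_L$ is a closed convex minorant of $g$, and the reverse inequality via the chain decomposition of $\vx \in [0,1]^n$ into indicator vectors, Jensen's inequality for an arbitrary convex minorant, Abel summation, and Edmonds' greedy theorem identifying the resulting linear functional with the maximizing vertex of $\setB_F$ in \eqref{eq:lovasz}. The decomposition itself checks out (the coefficients are nonnegative, sum to one, and reproduce each coordinate $x_{\sigma(j)}$), and your handling of the domain issue is right: for $\vx \notin [0,1]^n$ one drives $\vy^T\vx - g^*(\vy)$ to $+\infty$ along $\vy = t\ve_i$, so both functions are $+\infty$ there. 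The one loose end is the normalization $F(\emptyset) = 0$, which you use silently: the Jensen bound is $(1-x_{\sigma(1)})F(\emptyset) + \sum_{k}(x_{\sigma(k)} - x_{\sigma(k+1)})F(\setS_k) + x_{\sigma(n)}F(\setS_n) = F(\emptyset) + \vg_\sigma^T\vx$, so identifying it with $\vg_\sigma^T\vx = f_L(\vx)$ (and indeed having $f_L(\mathbf{1}_\setS) = F(\setS)$ at $\setS = \emptyset$) requires $F(\emptyset) = 0$. This holds for the DkS cost function in the paper and is the standard convention for the base polytope, but you should state it explicitly as a hypothesis. With that caveat the proof is complete, and it has the advantage over the paper's bare citation of making visible exactly where submodularity enters, namely in Edmonds' theorem certifying that the greedy vector lies in $\setB_F$ and attains the support-function maximum.
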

\noindent Hence, the Lov\'asz extension corresponds to the convex closure, or the tightest convex under-estimator (in the above sense) of the submodular function $F(.)$ on the domain $[0,1]^n$, which justifies its use as a principled, continuous relaxation of the quadratic cost function of DkS.  

While the above result places the Lov\'asz relaxation \eqref{eq:Lrelax} on a firm theoretical footing, from an algorithmic perspective, a notable drawback of the approach is that the Lov\'asz extension does not admit an analytical form in general. This stems from the fact that $f_L(.)$ is the support function of the base polytope $\setB_F$ of $F(.)$ (see equation \eqref{eq:lovasz}), which is characterized by (potentially) an exponential number of inequalities in the problem dimension $n$. In his seminal work \cite{edmonds1970submodular}, Edmonds presented a simple greedy algorithm for computing a subgradient of the Lov\'asz extension at any point $\vx \in [0,1]^n$ in time $O(n\log n +nT)$ \footnote{Here, $T > 0$ is an upper bound on the maximum time taken to evaluate $F(.)$ for any choice of subset $\setS \subseteq \setV$.} \emph{without} explicitly constructing $\setB_F$. While this fact can be exploited to solve the Lov\'asz relaxation \eqref{eq:Lrelax} via a projected subgradient algorithm, such an approach suffers from slow convergence. Indeed, the primal convergence rate (i.e., convergence to the optimal value) of subgradient methods for convex problems is $O(1/\sqrt{t})$ \cite{nesterov2013introductory}, where $t$ is the number of iterations. Hence, adopting such an approach is limited to producing low-accuracy solutions for large-scale problems. 

 We now demonstrate that it is possible to solve the Lov\'asz relaxation (hereafter referred to as the L-relaxation) in a substantially more efficient manner. Our key result is that for the DkS problem, we can explicitly characterize the base polytope of the submodular cost function $F$, which in turn allows us to obtain an analytical form for the Lov\'asz extension. Finally, we apply a primal-dual algorithm that leverages the explicit structure of the problem to compute efficient solutions for the L-relaxation.

Before proceeding, we introduce the following notation: let $\vd: = \vW\mathbf{1}_n$ represent the (weighted) degree vector of the vertices of $\setG$, and $\vB \in \{-1,0,1\}^{n \times m}$ denote the directed vertex-edge incidence matrix of the graph $\setG$. Note that a column of $\vB$ corresponds to an edge $(i,j) \in \setE$, and is of the form $(\ve_i-\ve_j)$, where $\ve_i$ denotes the $i^{th}$ canonical basis vector in $\mathbb{R}^n$. We are now ready to state our main result.

\begin{theorem}
The base polytope of $F$ can be expressed as 
$$ \mathcal{B}_F = \{\vg \in \mathbb{R}^n : \vg = -\vd+\vB\vf, \forall \, |\vf| \leq \vw \}.$$
\end{theorem}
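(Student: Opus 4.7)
The plan is to leverage the edge-wise decomposition of $F$ used in the proof of Theorem 1 together with the classical identity that the base polytope of a sum of submodular functions equals the Minkowski sum of the individual base polytopes \cite{fujishige2005submodular,bach2013learning}. Since $\vW$ is symmetric with zero diagonal, I would write $F(\setS) = -\mathbf{1}_{\setS}^T \vW \mathbf{1}_{\setS} = \sum_{(i,j) \in \setE} F_{ij}(\setS)$, where each elementary submodular function equals $-2w_{ij}$ on subsets containing both $i$ and $j$ and $0$ otherwise (the factor of two reflects the symmetric appearance of edge $(i,j)$ in the quadratic form). This reduces the problem to $\setB_F = \sum_{(i,j) \in \setE} \setB_{F_{ij}}$, so it suffices to determine each single-edge base polytope and then assemble.

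For a fixed edge, because $F_{ij}$ is insensitive to coordinates outside $\{i,j\}$, the submodular inequalities at the singletons $\{k\}$ and their complements $\setV \setminus \{k\}$ for $k \notin \{i,j\}$, combined with the equality $\vg^T \mathbf{1}_n = -2w_{ij}$, force $g_k = 0$ off $\{i,j\}$. The remaining active constraints collapse to $g_i + g_j = -2w_{ij}$ with $g_i, g_j \leq 0$, tracing out a one-dimensional segment that I would parametrize symmetrically as $g_i = -w_{ij} + f_{ij}$ and $g_j = -w_{ij} - f_{ij}$ with $|f_{ij}| \leq w_{ij}$. In vector form this is $\setB_{F_{ij}} = \{-w_{ij}(\ve_i + \ve_j) + f_{ij}(\ve_i - \ve_j) : |f_{ij}| \leq w_{ij}\}$, with $\ve_i - \ve_j$ precisely the column of $\vB$ assigned to edge $(i,j)$.

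Finally, taking the Minkowski sum over $\setE$, the constant pieces aggregate to $-\sum_{(i,j) \in \setE} w_{ij}(\ve_i + \ve_j)$, whose $k$-th coordinate equals $-\sum_{j:(k,j) \in \setE} w_{kj} = -d_k$ by definition of the weighted degree; that is, the constant collapses to exactly $-\vd$. The variable pieces collect into $\sum_{(i,j) \in \setE} f_{ij}(\ve_i - \ve_j) = \vB \vf$, and the per-edge box constraints combine into the componentwise inequality $|\vf| \leq \vw$, delivering the claimed representation.

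The principal obstacle is the rigorous invocation of the Minkowski-sum identity for base polytopes, which is a classical but nontrivial consequence of Edmonds' theorem and merits an explicit citation. Once this is in hand, the single-edge polytope computation and the edge-by-edge bookkeeping for $\vd$ and $\vB \vf$ are routine algebra, and no further technical difficulties arise.
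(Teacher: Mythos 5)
Your proposal is correct and follows essentially the same route as the paper: edge-wise decomposition of $F$, the Minkowski-sum identity for base polytopes of decomposable submodular functions, an explicit computation of each single-edge base polytope as a segment, and reassembly of the constant and variable parts into $-\vd$ and $\vB\vf$. If anything, your bookkeeping is slightly cleaner, since by assigning each edge the value $-2w_{ij}$ you absorb the factor of $2$ from the symmetric quadratic form and land exactly on the stated polytope, whereas the paper's proof obtains $\tfrac{1}{2}[-\vd+\vB\vf]$ and notes the result holds ``up to the global scaling factor $1/2$.''
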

\begin{proof}
Once again, we exploit the fact that the function $F$ can be linearly decomposed as $F(\setS) = \sum_{(i,j) \in \setE} F_{ij}(\setS)$, where $F_{ij}$ has been previously defined in  \eqref{eq:littleF}. An important result \cite[Theorem 44.6]{schrijver2003combinatorial} regarding such decomposable submodular functions asserts that the base polytope can be expressed as the set-addition of the base polytopes of the constituent functions  $\{F_{ij}\}_{(i,j) \in \setE}$, i.e., we have
$
\mathcal{B}_F = \sum_{(i,j) \in \setE} \setB_{F_{ij}},
$
where $\setB_{F_{ij}}$ is the base polytope of $F_{ij}$. This suggests that if we can find a simple expression for each constituent base polytope $\setB_{F_{ij}}$, then we can possibly characterize the full polytope $\setB_F$.

To this end, consider a component function $F_{ij}$. Applying the definition \eqref{eq:basePolytope}, its base polytope can be expressed as 
\begin{equation}
 \setB_{F_{ij}} = \{\vg \in \mathbb{R}^n: g_i \leq 0, g_j \leq 0, g_i + g_j \leq -w_{ij}, g_i + g_j = -w_{ij}\},    
\end{equation}
which in turn can be re-expressed as
\begin{equation}
\begin{aligned}
\setB_{F_{ij}} &= -w_{ij}\textrm{conv}(\ve_i,\ve_j), \\
&= -w_{ij}[\alpha_{ij}(\ve_i-\ve_j) + \ve_j],  \forall \; \alpha_{ij} \in [0,1].
\end{aligned}
\end{equation}
Introducing the change of variable $\beta_{ij} := 1-2\alpha_{ij}$, we obtain
\begin{equation}
\begin{aligned}
\setB_{F_{ij}} &= \frac{-w_{ij}}{2}[(\ve_i + \ve_j) - \beta_{ij}(\ve_i - \ve_j) ], \\
&= \frac{1}{2}[w_{ij}\beta_{ij}(\ve_i - \ve_j) - w_{ij}(\ve_i + \ve_j)], \forall \; \beta_{ij} \in [-1,1].
\end{aligned}
\end{equation}
This allows us to obtain the complete representation 
\begin{equation}
\setB_F = -\frac{1}{2} \sum_{(i,j) \in \setE} w_{ij}(\ve_i + \ve_j)
+ \frac{1}{2} \sum_{(i,j) \in \setE} w_{ij}\beta_{ij}(\ve_i - \ve_j).
\end{equation}
Note that the first summand is precisely the (weighted) degree vector $\vd$ (as the contribution of  each vertex $i \in \setV$ is $\sum_{j:(i,j) \in \setE} w_{ij}\ve_i$), while the second summand can be expressed as $\vB\vf$, where $\vf \in \mathbb{R}^m$ is a vector with entries $f_{ij}:= w_{ij}\beta_{ij}$. Since $|\beta_{ij}| \leq 1$, by construction, we have $|f_{ij}| \leq w_{ij}$, and thus $|\vf| \leq \vw$. Putting everything together, we finally obtain the following characterization of the base polytope
\begin{equation}
    \setB_F = \frac{1}{2}[-\vd + \vB\vf], \forall \; |\vf| \leq \vw,
\end{equation}
which yields the desired result up to the global scaling factor $1/2$. 
\end{proof}
As an immediate consequence of the above result, we obtain the following analytical form for the Lov\'asz extension.
\begin{corollary}
The Lov\'asz extension of $F$ is 
$$f_L(\vx) = -\vd^T\vx + \sum_{(i,j) \in \setE} w_{ij}|x_i - x_j|.$$
\end{corollary}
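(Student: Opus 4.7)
The plan is to substitute the explicit characterization of $\setB_F$ from Theorem 2 directly into the defining formula $f_L(\vx) = \max_{\vg \in \setB_F} \vg^T\vx$ and then exploit the separability of the resulting maximization over the edges. Concretely, every element of $\setB_F$ has the form $\vg = -\vd + \vB\vf$ with $|\vf| \leq \vw$, so I would first write
\begin{equation*}
f_L(\vx) = \max_{|\vf| \leq \vw}\, (-\vd + \vB\vf)^T\vx = -\vd^T\vx + \max_{|\vf| \leq \vw}\, \vf^T\vB^T\vx,
\end{equation*}
pulling out the constant term in $\vf$.

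Next I would unpack $\vB^T\vx$: since each column of $\vB$ corresponding to edge $(i,j) \in \setE$ equals $\ve_i - \ve_j$, the entries of $\vB^T\vx$ are precisely the signed differences $x_i - x_j$, one per edge. Consequently,
\begin{equation*}
\vf^T\vB^T\vx = \sum_{(i,j) \in \setE} f_{ij}(x_i - x_j).
\end{equation*}
The feasible set $\{|\vf| \leq \vw\}$ is a box, so the maximization decouples edgewise into independent one-dimensional problems $\max_{|f_{ij}| \leq w_{ij}} f_{ij}(x_i - x_j)$, each of which is attained by choosing $f_{ij} = w_{ij}\,\mathrm{sign}(x_i - x_j)$ and yields the value $w_{ij}|x_i - x_j|$.

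Summing these edgewise maxima gives $\max_{|\vf| \leq \vw} \vf^T\vB^T\vx = \sum_{(i,j) \in \setE} w_{ij}|x_i - x_j|$, and substituting back produces the claimed formula. There is no real obstacle here beyond bookkeeping; the only subtlety worth flagging is the factor $1/2$ appearing at the end of Theorem 2's proof. Since the stated characterization of $\setB_F$ in Theorem 2 already absorbs that scaling, the computation above applies verbatim and no extra constant survives in the final expression.
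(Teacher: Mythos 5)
Your proposal is correct and follows essentially the same route as the paper: substitute the parametrization $\vg = -\vd + \vB\vf$, $|\vf| \leq \vw$ into the support-function definition, pull out the linear term, and maximize the remaining inner product over the box edge by edge to obtain the weighted total-variation term. The only difference is that you spell out the edgewise optimizer $f_{ij} = w_{ij}\,\mathrm{sign}(x_i - x_j)$ explicitly and flag the factor $1/2$, both of which are fine and consistent with the stated form of Theorem 2.
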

\begin{proof}
Utilizing the form of the base polytope, we can express the Lov\'asz extension as 
\begin{equation}
\begin{aligned}
f_L(\vx) &= \underset{|\vf| \leq \vw}{\max} (-\vd + \vB\vf)^T\vx \\
&= -\vd^T\vx + \underset{|\vf| \leq \vw}{\max} (\vB^T\vx)^T\vf \\
&= -\vd^T\vx + \sum_{(i,j) \in \setE} w_{ij}|x_i - x_j|,
\end{aligned}
\end{equation}
where in going from the second to the third step we have utilized the fact that the vector $\vB^T\vx$ generates pair-wise differences between entries of $\vx$ that are connected by an edge in $\setG$.   
\end{proof}
The above result allows us to express the L-relaxation (in maximization form) as
\begin{equation}\label{eq:Lrelax2}
    \underset{\vx \in \setP}{\max} \biggl\{\vd^T\vx - \sum_{(i,j) \in \setE} w_{ij}|x_i - x_j|\biggr\}.
\end{equation}
An intuitive explanation of the above formulation is as follows. For any binary vector $\vx \in \setP$ that represents an induced subgraph $\setG_S$, the first term in the above objective function is a measure of the volume of $\setG_S$, i.e., it is the sum of the degrees of all the vertices in the induced subgraph. Meanwhile, the second term, which corresponds to graph total variation, counts the weighted sum of all edges crossing the boundary of $\setG_S$, i.e., it measures the cut. The difference of these two terms is then (twice) the sum of all edges in $\setG_S$, which is precisely the objective function that the DkS problem seeks to maximize. Equivalently stated, we wish to find an induced subgraph on $k$ vertices with high volume and small cut.

When solving the L-relaxation, we allow for non-binary vectors $\vx \in \setP$. In this case, the value of each entry of $\vx$ is a soft ``membership'' score that reflects the ``likelihood'' of a vertex belonging to the $k$-densest subgraph. The objective function then places higher emphasis on those likelihood profiles where the membership values are largest for those vertices that have large degree and are simultaneously ``smooth'' (in the total-variation sense) with respect to their one-hop neighbors, which is an intuitive proxy for dense subgraphs of size $k$. 

Hence, the L-relaxation constitutes a meaningful relaxation of the DkS problem. That being said, the form of the Lov\'asz extension reveals that problem \eqref{eq:DKS2} is neither differentiable, nor strongly concave, which constitutes a computational impediment in solving it efficiently at scale. In the next section, we show that by exploiting the structure of the problem in an intelligent fashion, it is in fact possible to develop an efficient and scalable algorithm. 

\section{Algorithms}

In order to motivate our algorithmic approach, we express the L-relaxation in the following manner. First, we define the functions
\begin{equation}
    g(\vx):=
    \begin{cases}
    -\vd^T\vx, & \vx \in \setP,\\
    +\infty, & \textrm{otherwise}
    \end{cases}
\end{equation}
and
\begin{equation}
h(\vx):= \sum_{(i,j) \in \setE} w_{ij}|x_i - x_j|
= \|\vD\vB^T\vx\|_1,
\end{equation}
where $\vD:=\textrm{diag}(\vw)$ denotes a diagonal matrix containing the edge-weights $\vw$. We can now express problem \eqref{eq:Lrelax} as
\begin{equation}
    \underset{\vx \in \mathbb{R}^n}{\min} ~\; g(\vx) + h(\vx),
\end{equation}
which in turn is equivalent to 
\begin{equation} \label{eq:ADMM1}
 \begin{aligned}
& \underset{\vx, \vz \in \mathbb{R}^n}{\text{min}}
& & g(\vx) + h(\vz)\\
& \quad \text{s.to}& &\vx - \vz = \mathbf{0}.
\end{aligned}   
\end{equation}
The above problem is now in a form  suitable for the application of the Alternating Direction Method of Multipliers (ADMM) \cite{lions1979splitting,boyd2011distributed} - a flexible framework for solving convex optimization problems that fuses the benefits of dual decomposition and augmented Lagrangian techniques into a simple primal-dual algorithm. The main utility of ADMM is that it decomposes complicated cost functions into simpler components (these can be non-smooth or even represent embedded constraints) via variable splitting and allows them to be handled separately, while featuring guaranteed convergence to the optimal solution of the problem under very mild assumptions. While being a very general framework for solving convex optimization problems, ADMM is most efficient when its sub-problems admit an analytical or simple computational solution. 

For the particular form of variable splitting employed in problem \eqref{eq:ADMM1}, it can be shown that the ADMM updates are given by 
\begin{subequations}
\begin{align}
\vx^{t+1} &= \textrm{prox}_{\rho g}(\vz^t - \vu^t) \\
\vz^{t+1} &= \textrm{prox}_{\rho h}(\vx^{t+1} + \vu^t) \\
\vu^{t+1} &= \vu^t + \vx^{t+1} - \vz^{t+1}
\end{align}
\end{subequations}
where $\vu \in\mathbb{R}^n$ is the normalized dual variable associated with the consensus constraint, $\rho > 0$ is a tuning parameter, and 
\begin{equation} \label{eq:prox}
    \textrm{prox}_{\rho f} (\vv): = \arg \underset{\vx \in \mathbb{R}^n}{\min} \; f(\vx) + \frac{\rho}{2}\|\vx - \vv\|_2^2
\end{equation}
denotes the \emph{proximal operator} \cite{parikh2014proximal} of a closed, proper, convex function $f$. It has been shown \cite{he20121} that the algorithm converges at a rate of $O(1/t)$, which represents an order of magnitude improvement over subgradient methods. However, since ADMM accesses the functions $g,h$ via their proximal operators, the overall efficiency of the algorithm depends on the complexity of evaluating these operators.

First, we focus on the complexity of the $\vx$- update, i.e., computing the proximal operator of the function $g$. Our next result shows that it admits a simple solution. 
\begin{lemma}
The optimal solution $\vx^* := \textrm{prox}_{\rho g}(\vv)$ is characterized by the pair of conditions
$$ x_i^* = \max \biggl\{\min{\biggl(v_i + (1/\rho)(d_i - \nu^*),1 \biggr)},0\biggr\}, \forall \; i \in [n], 
\sum_{i=1}^n x_i^* = k,  
$$
where $\nu^* \in \mathbb{R}$ is the optimal dual variable associated with the sum-to-$k$ constraint.
\end{lemma}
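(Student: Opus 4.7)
The plan is to derive the claimed form by writing down the first-order optimality (KKT) conditions for the proximal subproblem. Expanding the definition in \eqref{eq:prox}, the proximal operator is the unique minimizer of the strongly convex quadratic program
\begin{equation*}
\underset{\vx \in [0,1]^n}{\min} \; -\vd^T\vx + \frac{\rho}{2}\|\vx - \vv\|_2^2 \quad \text{s.to} \quad \mathbf{1}^T\vx = k,
\end{equation*}
obtained by substituting $g$ and noting that the indicator of $\setP$ contributes the box and sum-to-$k$ constraints. Since the objective is $\rho$-strongly convex and the feasible set is a nonempty polytope (assuming $1<k<n$, Slater's condition holds), a unique optimal primal solution $\vx^*$ exists and strong duality holds with respect to the equality constraint.

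First I would partially dualize only the sum-to-$k$ constraint with multiplier $\nu \in \mathbb{R}$, while keeping the box constraints $\vx\in[0,1]^n$ explicit. The Lagrangian becomes separable across coordinates:
\begin{equation*}
L(\vx,\nu) = \sum_{i=1}^{n}\Bigl[-d_i x_i + \tfrac{\rho}{2}(x_i-v_i)^2 + \nu x_i\Bigr] - \nu k.
\end{equation*}
For each fixed $\nu$, the inner minimization over $x_i\in[0,1]$ is a one-dimensional strongly convex quadratic with a unique minimizer. Setting the derivative to zero gives the unconstrained stationary point $v_i + (d_i-\nu)/\rho$, and projecting onto $[0,1]$ yields the clip expression $\max\{\min(v_i+(d_i-\nu)/\rho,\,1),\,0\}$. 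This produces the first claimed equation as a function of $\nu$.

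Next I would invoke strong duality to conclude that the dual-optimal $\nu^\star$ coincides with the multiplier associated with the primal-optimal $\vx^*$, and that $(\vx^*,\nu^\star)$ together satisfy all KKT conditions. Primal feasibility of the equality constraint then forces $\sum_{i=1}^n x_i^* = k$, which is the second claimed equation. Existence and uniqueness of such a $\nu^\star$ can be argued, if desired, by observing that the map $\nu \mapsto \sum_i \max\{\min(v_i+(d_i-\nu)/\rho,1),0\}$ is continuous, piecewise affine, and non-increasing in $\nu$, with limits $n$ and $0$ as $\nu\to-\infty$ and $\nu\to+\infty$, respectively, so it crosses the level $k\in(1,n-1)$ at some $\nu^\star$.

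There is no real obstacle here; the main subtlety is simply justifying the separability of the partial Lagrangian (which is immediate because both the quadratic and the box constraints decouple across $i$) and noting that once the box-constrained coordinate subproblems are solved in closed form, the only remaining degree of freedom is the scalar $\nu^\star$, pinned down by the sum-to-$k$ requirement. This scalar can in turn be located efficiently (e.g.\ by bisection or a sort-based routine), but that is an algorithmic comment rather than part of the characterization the lemma asserts.
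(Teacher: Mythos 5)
Your proposal is correct and follows essentially the same route as the paper: dualize only the sum-to-$k$ constraint, keep the box constraints explicit, exploit coordinate-wise separability of the partial Lagrangian to get the clipping formula, and recover the second condition from primal feasibility via KKT. The added remark on existence and monotonicity of the map $\nu \mapsto \sum_i x_i^*(\nu)$ is not part of the paper's proof of the lemma but matches the discussion immediately following it, where this monotone function is used to justify the bisection routine.
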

\begin{proof}
Define the function
\begin{equation}
    \Tilde{f}(\vx):=
    \begin{cases}
    -\vd^T\vx + \frac{\rho}{2}\|\vx- \vv\|_2^2, & \mathbf{0} \leq \vx \leq \mathbf{1},\\
    + \infty, & \textrm{otherwise}. 
    \end{cases}
\end{equation}
Then, the proximal operator of $g$ is given by
\begin{equation}\label{eq:proxg2}
    \textrm{prox}_{\rho g}(\vv) = \arg \underset{\mathbf{1}^T\vx = k}{\min} \; \Tilde{f}(\vx).
\end{equation}
The Lagrangian of the above problem is 
\begin{equation}
L(\vx,\nu) := 
\begin{cases}
-\vd^T\vx + (\rho/2)\|\vx-\vv\|_2^2 + \nu(\mathbf{1}^T\vx - k), & \mathbf{0} \leq \vx \leq \mathbf{1},\\
+ \infty, & \textrm{otherwise}
\end{cases}
\end{equation}
where $\nu \in \mathbb{R}$ is the dual variable associated with the equality constraint. Let $(\vx^*,\nu^*)$ denote the primal-dual optimal pair of \eqref{eq:proxg2}. The Karush-Kuhn-Tucker (KKT) conditions (which are necessary and sufficient for optimality in this case) assert that the pair $(\vx^*,\nu^*)$ satisfy
\begin{subequations}
\begin{align}
\vx^* = \arg \underset{\mathbf{0} \leq \vx \leq \mathbf{1}}{\min} L(\vx,\nu^*), 
~\mathbf{1}^T\vx^* = k. 
\end{align}
\end{subequations}
Since the Lagrangian is linearly separable in $\vx$, the first condition simplifies to
\begin{equation}
    x^*_i = \arg \underset{0 \leq x_i \leq 1}{\min} \biggl\{ (\nu^*-d_i)x_i + (\rho/2)(x_i-v_i)^2\biggr\}, \forall \; i \in [n]. 
\end{equation}
The solution of each sub-problem can be computed in closed form as 
\begin{equation}
x_i^* = 
\begin{cases}
0, & v_i < -(1/\rho)(d_i - \nu^*)\\
v_i + (1/\rho)(d_i - \nu^*), & v_i \in [-(1/\rho)(d_i - \nu^*), 1-(1/\rho)(d_i - \nu^*)]\\
1, & v_i > 1-(1/\rho)(d_i - \nu^*)
\end{cases}
\end{equation}
which can be compactly represented as 
\begin{equation}
x_i^* = \max \biggl\{\min{\biggl(v_i + (1/\rho)(d_i - \nu^*),1 \biggr)},0\biggr\}, \forall \; i \in [n].
\end{equation}
\end{proof}

\noindent The above observation suggests a very simple approach to computing $(\vx^*,\nu^*)$. Define the non-linear equation
\begin{equation}
\phi(\nu):= \sum_{i=1}^n \max\biggl\{\min{\biggl(v_i + (1/\rho)(d_i - \nu),1 \biggr)},0\biggr\} - k,
\end{equation}
which is monotone, non-increasing in $\nu$. Since $\phi(\nu^*) = 0$, in order to solve for $\nu^*$ (and hence, $\vx^*$), we can resort to bisection search. We choose the lower and upper limits of the initial bisection interval to be $\nu_{l} := \underset{i \in [n]}{\min}\{d_i + \rho v_i\}-1$ and $\nu_{u}:= \underset{i \in [n]}{\max} \{d_i + \rho v_i\}$ respectively, which yields the initial value interval $[\phi(\nu_{l}),\phi(\nu_{u})]= [n-k,-k]$. Pseudocode for the bisection algorithm is provided in Algorithm \ref{alg:algo1}.

\begin{algorithm}
	\DontPrintSemicolon 
	\SetAlgoLined
	\footnotesize
	\textbf{Input:} $\vv \in \mathbb{R}^n$, degree vector $\vd \in \mathbb{R}^n$, subgraph size $k$, parameter $\rho >0$, exit tolerance $\epsilon > 0$.\;
	\textbf{Output:} The solution $\vx^*:=\textrm{prox}_{\rho g}(\vv)$. \;
	\textbf{Initialize:} $\nu_{l} = \underset{i \in [n]}{\min}\{d_i + \rho v_i\}-1,\nu_{u} = \underset{i \in [n]}{\max} \{d_i + \rho v_i\}$\;
	\Repeat{$\phi(\nu_l) - \phi(\nu_u) \leq \epsilon$}{
		$\nu_m = (\nu_l +\nu_u)/2$\;
		\eIf{$\phi(\nu_m)\phi(\nu_u) < 0$}
		{$\nu_l = \nu_m$\;}
		{$\nu_u = \nu_m$\;}
	}
	\textbf{Return:} $x^*_i = \max \biggl\{\min{\biggl(v_i + (1/\rho)(d_i - \nu_m),1 \biggr)},0\biggr\}, \forall \; i \in [n].$
	\caption{\textsc{Bisection}($\vv,\vd,k,\rho,\epsilon$)}
	\label{alg:algo1}
\end{algorithm}
Note that for a prescribed exit tolerance $\epsilon$, the maximum number of bisection steps is $O(\log[\phi(\nu_l)-\phi(\nu_u)])$, which, for our choice of initial intervals $\{\nu_l,\nu_u\}$, is only $O(\log n)$. Hence, the maximum number of steps required by the bisection algorithm to terminate grows only logarithmically with the problem dimension $n$. We conclude that the above algorithm is an efficient means for evaluating the proximal operator of the function $g$. 

We now turn our attention towards assessing the complexity of computing the proximal operator of the graph total-variation function $h$. Unfortunately, this problem does not admit a simple analytical or computational solution. While its solution can be obtained via solving a sequence of maximum-flow problems \cite{gallo1989fast}, this incurs complexity $O(mn\log(n^2/m))$, which, even for sparse graphs (with $m = O(n))$ is $O(n^2\log n)$. Hence, owing to the high computational complexity of the $\vz$-update, the ADMM framework applied to \eqref{eq:ADMM1} is not scalable to large instances.

In hindsight, the above difficulty appears to stem from the fact that our choice of variable splitting was not effective in yielding simple ADMM updates. Consequently, with the aim of obtaining efficient updates, we introduce a different type of variable splitting. With some abuse of notation, we redefine the function $h$ as
\begin{equation}
h(\vz): = \|\vD\vz\|_1.
\end{equation}
Then, the L-relaxation \eqref{eq:Lrelax} can be equivalently expressed as 
\begin{equation} \label{eq:ADMM2}
 \begin{aligned}
& \underset{\vx \in \mathbb{R}^n, \vz \in \mathbb{R}^m}{\text{min}}
& & g(\vx) + h(\vz)\\
& \quad \text{s.to}& &\vB^T\vx - \vz = \mathbf{0}.
\end{aligned}   
\end{equation}
The ADMM updates for this problem can be shown to be
\begin{subequations}
\begin{align}
\vx^{t+1} &= \arg \underset{\vx}{\min}\; \biggl\{g(\vx) + (\rho/2)\|\vB^T\vx - \vz^t + \vu^t\|_2^2\biggr\}\\
\vz^{t+1} &= \textrm{prox}_{\rho h}(\vB^T\vx^{t+1} + \vu^t) \\
\vu^{t+1} &= \vu^t + \vB^T\vx^{t+1} - \vz^{t+1}
\end{align}
\end{subequations}
where $\vu \in\mathbb{R}^n$ is the normalized dual variable associated with the coupling constraint and $\rho > 0$ is a tuning parameter. Note that in this variant of ADMM, the proximal operator of the function $h$ admits an analytical solution given by \cite[Section 6.5.2]{parikh2014proximal}
\begin{equation}
    \textsc{Shrinkage}(\vv,\vw,\rho): = \max(0,\vv - \vw/\rho) - \max(0,-\vv - \vw/\rho).
\end{equation}
However, the downside is that the simplicity of the $\vx$-update does not carry over from the previous incarnation of ADMM (it is no longer the proximal operator of $g$), which again hinders the scalability of the algorithm. 

The lesson to be learned is that the although the functions $g$ and $h$ have proximal operators which can be evaluated efficiently, the matrix $\vB$ is the ``troublesome'' component as it complicates the primal updates in ADMM, no matter how we elect to perform variable splitting. While this seems like a major drawback of ADMM for our problem, it turns out that there is an \emph{inexact} version of ADMM, which can provide the desired solution. To be precise, we invoke the framework of \emph{Linearized}-ADMM (L-ADMM) \cite{condat2013primal}. In order to motivate the approach, we denote the augmented Lagrangian associated with problem \eqref{eq:ADMM2} as
\begin{equation}\label{eq:AL}
L_{\rho}(\vx,\vz,\vy): = g(\vx) + h(\vz) +\vy^T(\vB^T\vx - \vz) + (\rho/2)\|\vB^T\vx - \vz\|_2^2,
\end{equation}
where $\vy \in \mathbb{R}^m$ is the dual variable corresponding to the coupling constraint. In standard ADMM, the $\vx$-update is computed by minimizing $L_{\rho}(\vx,\vz,\vy)$ with respect to (w.r.t.) $\vx$ while keeping the other variables fixed. In L-ADMM, this update is modified by linearizing the quadratic term in the augmented Lagrangian and adding a new proximal regularization, i.e., replacing $(\rho/2)\|\vB^T\vx - \vz^t\|_2^2$ in \eqref{eq:AL} by 
$$\rho(\vB\vB^T\vx^t - \vB\vz^t)^T\vx + (\mu/2)\|\vx-\vx^t\|_2^2,$$
where $0 < \mu \leq 1/(\rho \|\vB\|_2^2) $ is a regularization parameter. After working out the updates, the algorithm takes the following form
\begin{subequations}
\begin{align}
\vx^{t+1} &= \textrm{prox}_{g/\mu}(\vx^t - \mu\rho\vB(\vB^T\vx^t - \vz^t + \vu^t)) \\
\vz^{t+1} &= \textrm{prox}_{\rho h}(\vB^T\vx^{t+1} + \vu^k) \\
\vu^{t+1} &= \vu^t + \vB^T\vx^{t+1} - \vz^{t+1}.
\end{align}
\end{subequations}
It is evident that L-ADMM accesses both of the functions $g$ and $h$ via their proximal operators only, in contrast to the variants of ADMM considered previously. Hence, each round of ADMM updates can be carried out efficiently, as we have already demonstrated that the proximal operators are easy to compute. We point out that although L-ADMM employs inexact updates, it is still guaranteed to converge to the optimal solution of \eqref{eq:ADMM2}. An even more remarkable feature of L-ADMM is that its convergence does not degrade compared to standard ADMM \cite{he20121}, i.e., it enjoys the same $O(1/t)$ convergence rate. Hence, the L-ADMM algorithm features both lightweight updates and fast convergence. Pseudocode for the algorithm is summarized in Algorithm \ref{alg:algo2}. In practice, we employ an over-relaxation technique \cite[Section 3.4]{boyd2011distributed}, i.e., we replace the term $\vB^T\vx^t$ in the $\vz,\vu$ updates by $\alpha\vB^T\vx^{t+1} + (1-\alpha)\vz^t$, where $\alpha >1 $ is an over-relaxation parameter. We observed that utilizing such a technique improves the empirical convergence of the L-ADMM algorithm.

\begin{algorithm}
	\DontPrintSemicolon 
	\SetAlgoLined
	\footnotesize
	\textbf{Input:} degree vector $\vd \in \mathbb{R}^n$, edge weight vector $\vw \in \mathbb{R}^m$, directed vertex-edge incidence matrix $\vB \in \{-1,0,1\}^{n \times m}$, subgraph size $k$, penalty parameter $\rho >0$, regularization parameter $\mu >0$, over-relaxation parameter $\alpha \in [1.5,1.8]$, bisection exit tolerance $\epsilon > 0$.\;
	\textbf{Output:} A solution of the L-relaxation. \;
	\textbf{Initialize:} $\vx^0 = \textrm{supp}(\textrm{top}_k(\vd)),\; \vz^0 = \vB^T\vx^0, \; \vu^0 = \mathbf{0}$, $\mu = 1/(\rho \|\vB\|_2^2)$, $t \leftarrow 0$\;
	\Repeat{convergence criterion is met}{
		$\vx^{t+1} = \textsc{Bisection}(\vx^t - \mu\rho\vB(\vB^T\vx^t - \vz^t + \vu^t),\vd,k,\rho,\epsilon)$\;
		$\vz^{t+1} = \textsc{Shrinkage}(\alpha\vB^T\vx^{t+1} + (1-\alpha)\vz^t + \vu^t,\vw,\rho)$ \;
		$\vu^{t+1} = \vu^t + \alpha\vB^T\vx^{t+1} + (1-\alpha)\vz^t - \vz^{t+1}$\;
		$t \leftarrow t+1$
	}
	\textbf{Return:} $\vx_L = (1/t){\sum_{i=1}^t \vx^i}$
	\caption{\textsc{L-ADMM}}
	\label{alg:algo2}
\end{algorithm}

Finally, since the solution $\Bar{\vx}$ computed by L-ADMM is not guaranteed to be integral in general, we require a post-processing step into order to ``round'' the solution of the L-relaxation into a binary indicator vector. One such step is to simply project the solution onto the discrete sum-to-$k$ constraints, i.e., we compute
\begin{equation}
    \vx \in \arg\underset{\substack{\vx \in \{0,1\}^n ,\\\mathbf{1}^T\vx = k }}{\min} \|\vx - \vx_L\|_2^2 = \textrm{supp}(\textrm{top}_k(\vx_L))
\end{equation}
which is tantamount to identifying the support of the $k$-largest entries in $\vx_L$, and can be performed in $O(nk)$ time. 

Additionally, we also employ an algorithmic refinement scheme where we use the solution of the L-relaxation to initialize a local-search algorithm. In this scheme, we consider the following  \emph{indefinite} relaxation of the DkS problem
\begin{equation}\label{eq:indef}
    \underset{\vx \in \setP}{\min} \; \biggl\{ f(\vx):=-\vx^T\vW\vx\biggr\}
\end{equation}
which is not convex, and hence cannot be optimally solved in polynomial-time in general. Consequently, we employ the Frank-Wolfe (FW) algorithm \cite{frank1956algorithm} initialized with the solution computed by L-ADMM in order to obtain a high-quality sub-optimal solution.  This is summarized in Algorithm \ref{alg:algo3}. Under the prescribed step-size rule, the algorithm is guaranteed to converge to a stationary point of problem \eqref{eq:indef} \cite[p. 268]{bertsekas2016}. 

\begin{algorithm}[t!]
	\DontPrintSemicolon 
	\SetAlgoLined
	\footnotesize
	\textbf{Input:} Adjacency matrix $\vW \in \mathbb{R}^{n \times n}$, subgraph size $k$, solution of L-ADMM $\vx_L$, Lipschitz constant $L = \|\vW\|_2$.\;
	\textbf{Output:} An approximate solution of the indefinite relaxation \eqref{eq:indef}. \;
	\textbf{Initialize:} $\vx^0 = \vx_L$, $t \leftarrow 0$\;
	\Repeat{convergence criterion is met}{ $\vg^t = -\vW\vx^t$\;
	$\Bar{\vx}^t = \textrm{supp}(\textrm{top}_k(-\vg^t))$\;
	$\alpha^t = \min\{1,((\Bar{\vx}^t - \vx^t)^T\vg^t)/(L\|\Bar{\vx}^t - \vx^t\|_2^2)\}$\;
	$\vx^{t+1} = {\vx}^t + \alpha^t(\Bar{\vx}^t - \vx^t)$\;
	$t \leftarrow t+1$
	}
	\textbf{Return:} $\vx^t$
	\caption{\textsc{Frank-Wolfe}}
	\label{alg:algo3}
\end{algorithm}

\section{Experiments}

In this section, we test the efficacy of the combined L-relaxation and post-processing schemes in discovering $k$-densest subgraphs across a diverse set of real-world graphs. We perform comparisons against a slew of state-of-the-art benchmarks to illustrate the superior performance of our approach. 

\subsection{Datasets}

A summary of the datasets used can be found in Table \ref{tab:stats}, which were retrieved from standard repositories \cite{snapnets,konect}. We pre-processed the datasets (which are unweighted) by symmetrizing the arcs if the network was originally directed, removing all self-loops, and extracting the largest connected component. 

\begin{table}
  \caption{\footnotesize Summary of network statistics: the number of vertices ($n$), the number of edges ($m$), and the network type.}
  \label{tab:stats}
  \footnotesize
  \begin{tabular}{cccl}
    \toprule
    Graph & $n$ & $m$ & Network Type\\
    \midrule
    \textsc{polBlog} & 1,224 & 16,714 & Social\\
    \textsc{Facebook} & 4,039 & 88,234 & Social\\
    \textsc{ppi-Human} & 21,557 & 342K & Biological \\
    \textsc{loc-Gowalla} & 196K & 950K  & Social\\
    \textsc{web-Google} & 875K & 5.10M & Web\\
    \textsc{YouTube} & 1.1M & 2.9M & Social\\
    \textsc{as-Skitter} & 1.7M & 12M & Autonomous Systems\\
    \textsc{wiki-Talk} & 2.4M & 5M & Communications\\
  \bottomrule
\end{tabular}
\end{table}
\subsection{Baselines}

In order to benchmark the performance of our algorithm, we employed the following baselines.
\begin{enumerate}
    \item{\textbf{Greedy:}} The greedy approximation algorithm proposed in \cite[Procedure 2]{feige2001dense}. Given an unweighted graph $\setG$ and a desired subgraph size $k$, the algorithm first constructs a set $\setH$ of the $k/2$ vertices with the largest degree, followed by adding another $k/2$ vertices from $\setV \setminus \setH$ which have the largest number of one-hop neighbors in $\setH$. 
    \item{\textbf{Truncated Power Method (TPM):}} A variant of the classic power method applied to the DkS formulation \eqref{eq:DKS} \cite[Algorithm 2]{yuan2013truncated}. At each step, the algorithm performs standard power-method iterations followed by projecting the result onto the discrete sum-to-$k$ set to ensure iterate feasibility.
    \item{\textbf{Low-rank Binary Principal Component:}} In this approach \cite{papailiopoulos2014finding}, a low rank decomposition of the adjacency matrix $\vW$ is first performed, followed by solving the DkS problem with the low-rank approximation in place of $\vW$. It turns out that in the the rank-$1$ approximation case, the resulting  problem  admits a simple solution in $O(n)$ time, whereas for constant ranks (i.e., $r = O(1)$), instead of checking all $\binom{n}{k}$ possible subsets in the worst-case, the problem can be surprisingly solved in polynomial-time $O(n^{r+1})$. In practice, it is only feasible to run the algorithm for ranks $r \leq 5$, owing to its high complexity. In fact, we were only able to run the algorithm with rank-$1$ approximation for all the datasets considered herein, as even the rank-$2$ case proved too expensive for all but the two smallest datasets.
    \item{\textbf{Edge-density upper bound:}} An important feature of the above approach is that the solution of the DkS problem with rank-$r$ approximation yields an \emph{a posteriori}, data-dependent upper bound on the optimal value of the DkS problem. In formal terms, let $\sigma_1 \geq \sigma_2 \geq \cdots \geq \sigma_d$ denote the $d \leq n$ non-zero singular values of $\vW$. If $\vW_r$ denotes the rank-$r$ approximation of $\vW$, with $\|\vW - \vW_r\|_2 = \sigma_{r+1}$, and $\setS_r^*$ denotes the optimal solution of the rank-$r$ approximation problem for a given $k$, then the quantity
    $$ \min\{1,(\mathbf{1}_{\setS_r^*}^T\vW_r\mathbf{1}_{\setS_r^*}+ \sigma_{r+1})/k-1,\sigma_1/k-1\}$$
    constitutes an upper bound on the edge-density of the optimal $k$-densest subgraph (see \cite[Lemma 3]{papailiopoulos2014finding}). The utility of the above result is that it provides a benchmark for assessing the sub-optimality of a solution generated by \emph{any} algorithm that aims to solve the DkS problem. Although the upper-bound is not attainable in general for every $k$, we demonstrate that the subgraphs computed by our approach can come close to attaining it on real-world graphs for a large range of $k$.
\end{enumerate}

\subsection{Implementation}

We performed all our experiments in Matlab on a Windows workstation equipped with $16$GB RAM and an Intel i7 processor. We used Matlab code for the low-rank principal component approximation approach and TPM \cite{papailiopoulos2014finding}. 

\noindent \textbf{L-ADMM:} Regarding the implementation of our L-ADMM algorithm for solving the L-relaxation, we set the ADMM penalty parameter $\rho = 0.1$, the proximal regularization parameter $\mu = 1/(\rho \|\vB\|_2^2)$, and the over-relaxation parameter $\alpha = 1.8$. The exit tolerance for the bisection subroutine was set to be $\epsilon = 10^{-6}$. The termination criterion of the ADMM algorithm was based on a standard measure \cite[Section 3.3]{boyd2011distributed} - given a pair of absolute and relative tolerances $\epsilon_{\textrm{abs}}$ and $\epsilon_{\textrm{rel}}$ respectively, at each iteration $t$ of ADMM, we compute the primal and dual tolerances  
\begin{subequations}
\begin{align}
  \epsilon_{\textrm{pri}} &= \sqrt{m}\epsilon_{\textrm{abs}}  + \epsilon_{\textrm{rel}}\max\{\|\vB^T\vx^t\|_2,\|\vz^t\|_2\}, \\
   \epsilon_{\textrm{dual}} &= \sqrt{n}\epsilon_{\textrm{abs}}  + \epsilon_{\textrm{rel}}\|\vB\vu^t\|_2.
\end{align}
\end{subequations}
Defining the primal and dual residuals  $\vr^t: = \vB^T\vx^t - \vz^t$ and $s^t: = \vB(\vz^t-\vz^{t-1})$ respectively, we stop the algorithm when these residuals are small in the sense that $\|r^t\|_2 \leq \epsilon_{\textrm{pri}}$ and $\|\vs^t\|_2 \leq \epsilon_{\textrm{dual}}$, or a maximum of $3000$ iterations have been performed. In our experiments, we set $\epsilon_{\textrm{abs}} = \epsilon_{\textrm{rel}} = 10^{-3}$ for all datasets excepting $\textsc{web-Google}$ and $\textsc{YouTube}$, for which we used the setting $\epsilon_{\textrm{abs}} = \epsilon_{\textrm{rel}} = 10^{-4}$.

\noindent \textbf{FW and TPM:} We initialized both algorithms with the solution returned by the L-ADMM algorithm. Note that for TPM, the solution of the L-relaxation is a superior initialization compared to selecting the support of the $k$-vertices with the largest degree (originally proposed in \cite{yuan2013truncated}), i.e., here we give TPM the benefit of the doubt. The algorithms are run till they attain convergence in terms of the cost function, or a maximum of 100 iterations are reached. Finally, while the solution of FW is not guaranteed to be integral in general, we observed in our experiments that the algorithm returns a solution that is integral (up to machine precision), and thus we did not perform a rounding step at the end.   
    \begin{figure}[t!]
    \includegraphics[width = 0.23\textwidth]{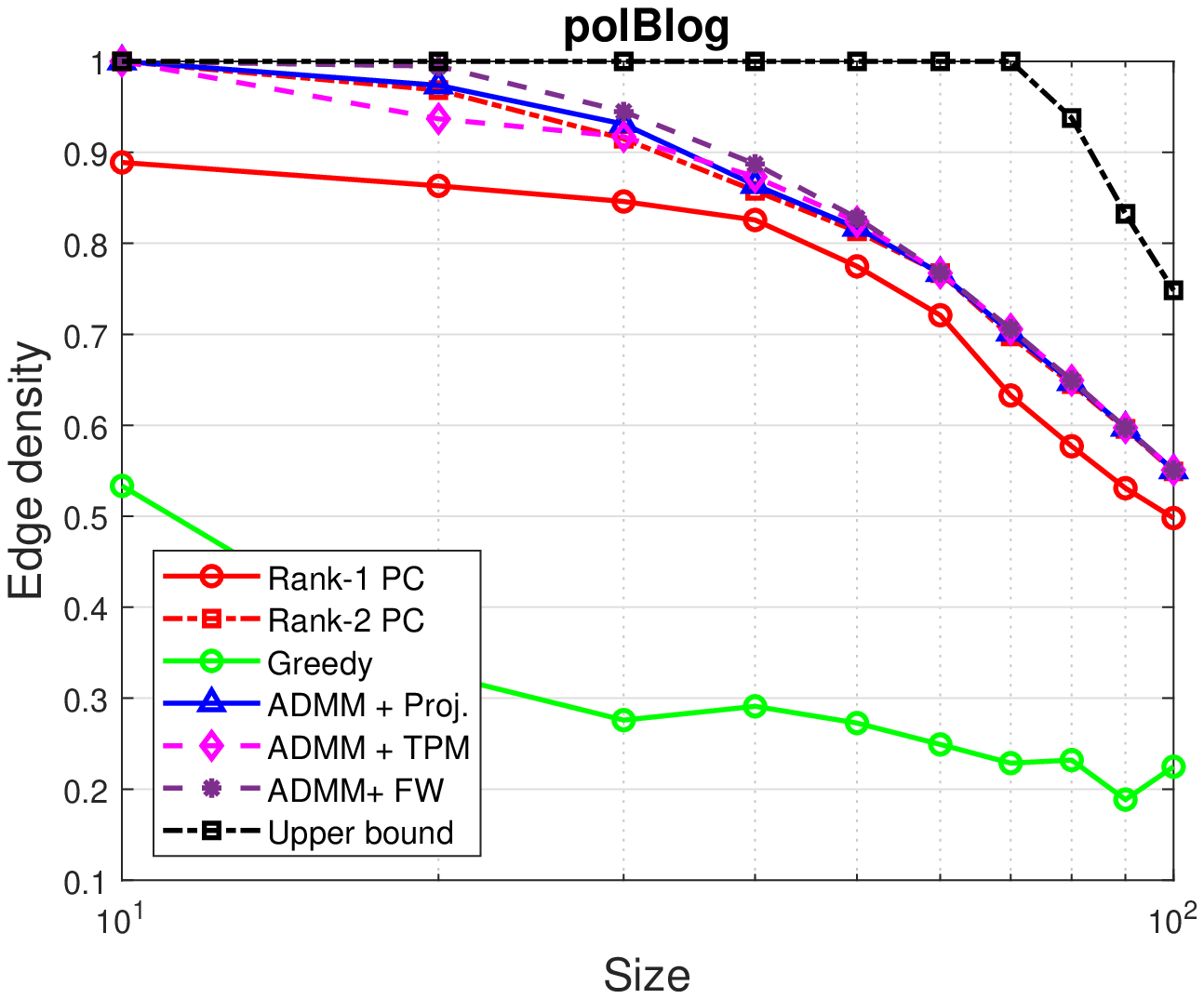}
    \includegraphics[width = 0.23\textwidth]{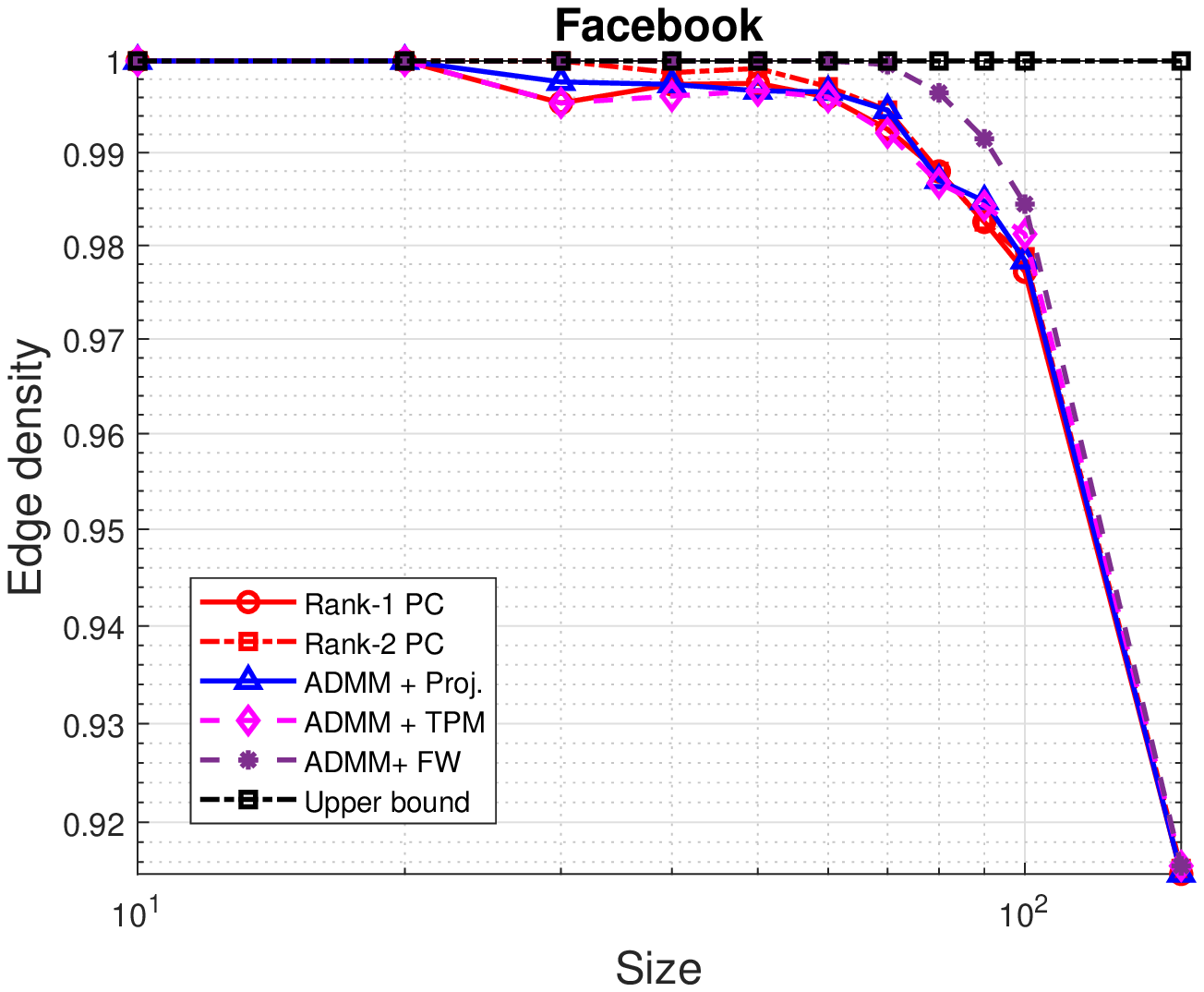}
    \includegraphics[width = 0.23\textwidth]{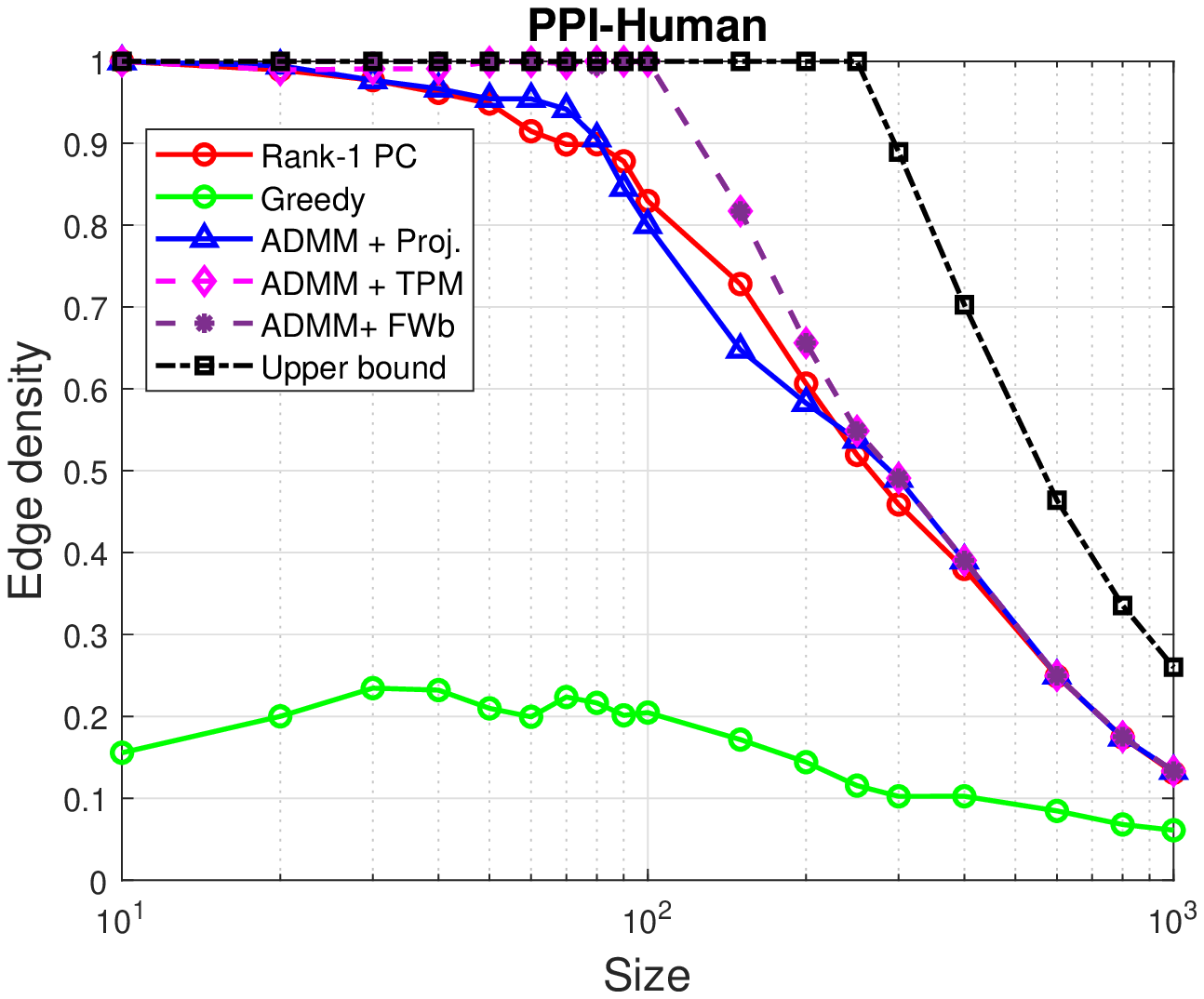}
    \includegraphics[width = 0.23\textwidth]{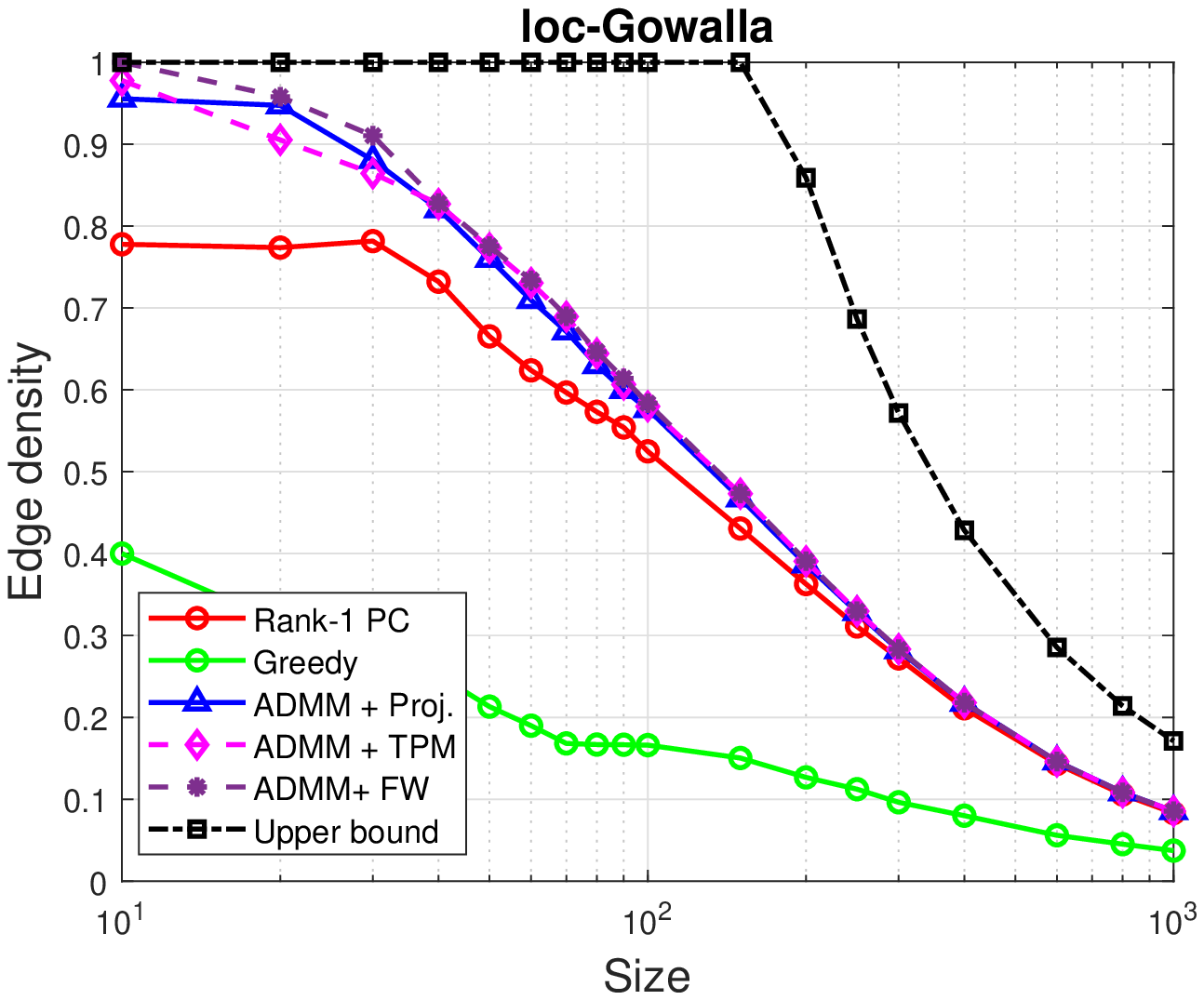}
    \includegraphics[width = 0.23\textwidth]{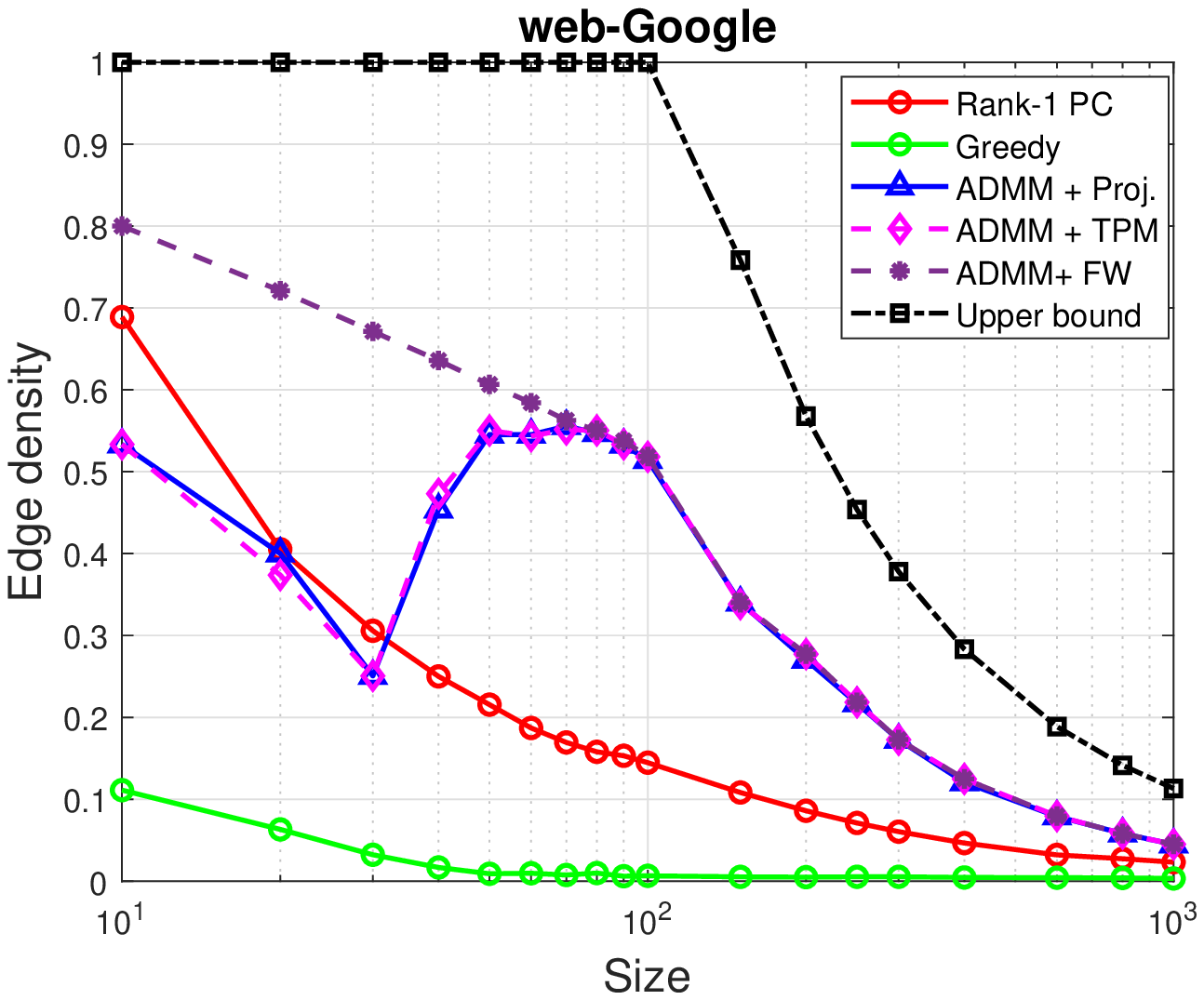}
    \includegraphics[width = 0.23\textwidth]{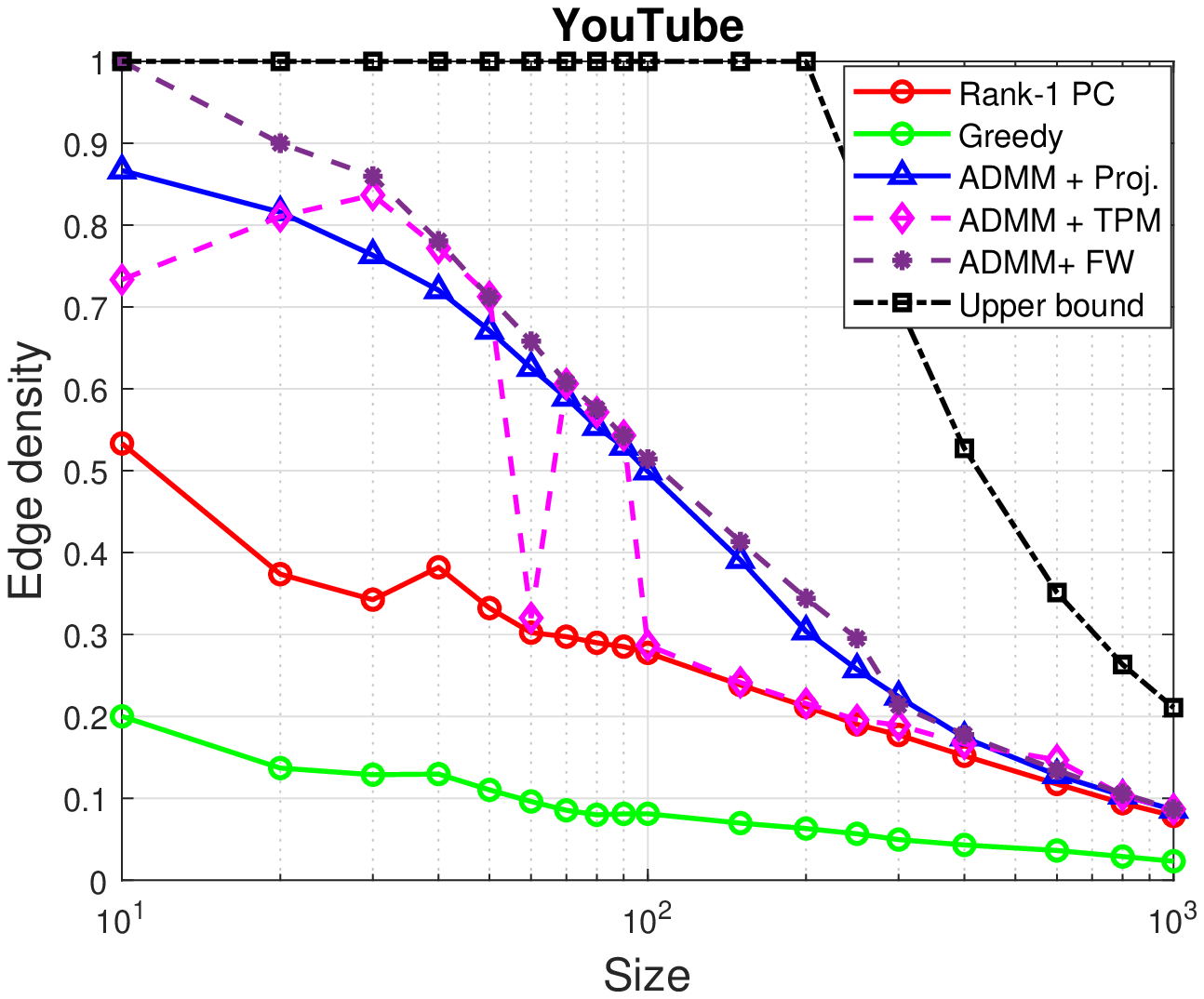}
    \includegraphics[width = 0.23\textwidth]{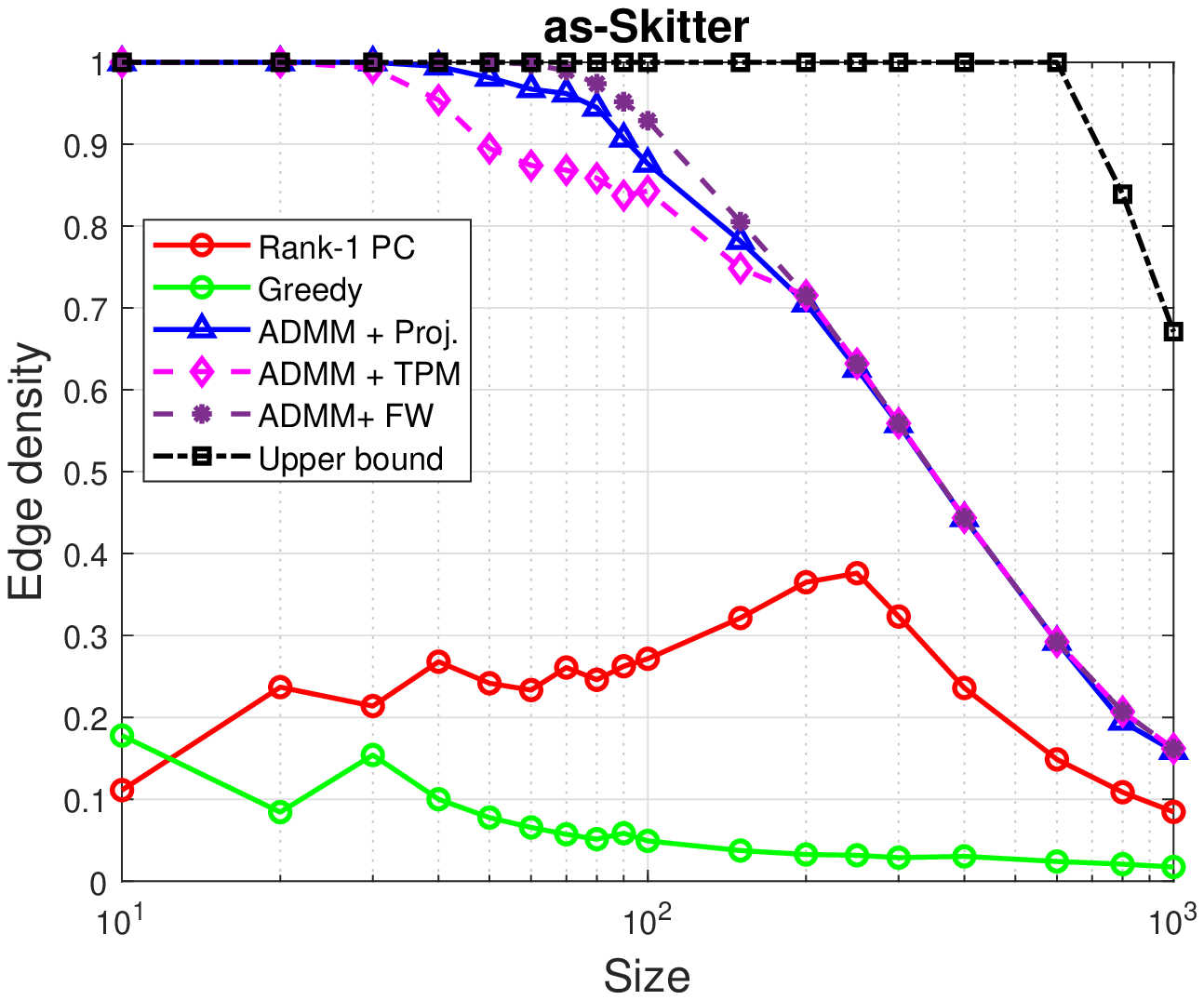}
    \includegraphics[width = 0.23\textwidth]{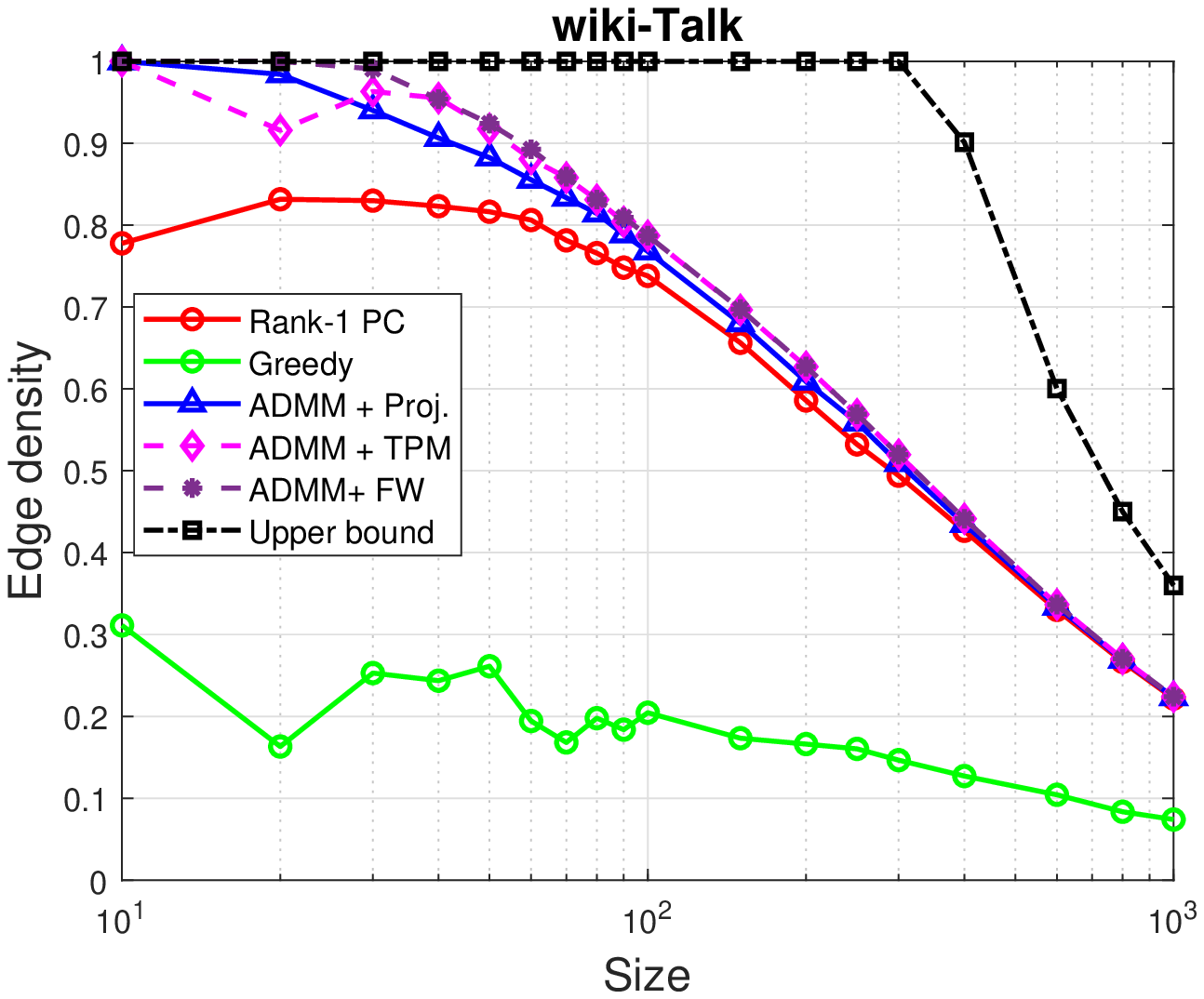}
    \caption{ Edge density vs size: We ran the rank-2 approximation only on the 2 smallest datasets owing to its complexity. The greedy algorithm is omitted from comparison on \textsc{Facebook} owing to its poor performance relative to the other baselines. For \textsc{Facebook} and \textsc{polBlog}, the upper-bound is computed w.r.t. the rank-2 approximation, while it is w.r.t. the rank-1 approximation on the remaining datasets.}
    \label{fig:density}
\end{figure}

\subsection{Results}
The outcomes of our experiments are depicted in Figure \ref{fig:density} and \ref{fig:time}, which depict the edge density of the subgraphs determined by the methods and the runtimes versus subgraph size $k$, respectively. Our main findings are as follows:

\begin{itemize}
    \item The upper-bound on the optimal edge-density computed from solving the low-rank approximation to the DkS problem is very useful in gauging the sub-optimality of the solutions computed by the different methods. It reveals that in contrast to pessimistic worst-case results regarding the DkS problem, several methods (with the exception of the greedy algorithm) can yield high quality solutions on real-world graphs.
    
    \item Our proposed approach, the L-relaxation coupled with the two rounding techniques (projection and iterative refinement via the Frank-Wolfe algorithm) performs very well. In particular, the latter scheme is consistently the best overall, outperforming TPM, the low-rank approximation, and the solution obtained by projecting the L-ADMM solution. Although TPM shares the same initialization as FW, it can exhibit non-monotone behavior with regard to density as the size is varied. We attribute this to the fact that FW is guaranteed to converge to a stationary point of the indefinite relaxation (which is empirically observed to be integral), whereas TPM simply increases the objective function of DkS. Furthermore, for small values of $k \leq 100$ (the regime where one intuitively expects the densest subgraphs to be present), L-ADMM + FW can attain the upper-bound in many cases, which is clearly optimal; otherwise it attains the most significant fraction of the upper-bound (typically $65-80\%$ for $k \leq 100$).
    
    \item The runtime of the rank-$2$ approximation algorithm scales unfavorably relative to the other methods, and hence it is omitted from the larger datasets. While ADMM comes second in terms of complexity, it is by no means unaffordable, taking an average of $15$ minutes to terminate on the largest graphs. This is due to the simplicity of its subroutines, which require performing bisection search and shrinkage at each step. Additionally, compared to running ADMM, the complexity of performing iterative refinement via the Frank-Wolfe algorithm is substantially  smaller. 
\end{itemize}
Our investigation reveals that solving the L-relaxation via L-ADMM followed by refining the solution via few iterations of FW constitutes a potent and efficient algorithmic framework for effectively mining dense subgraphs from real-world graphs.

\begin{figure}[t!]
    \includegraphics[width = 0.23\textwidth]{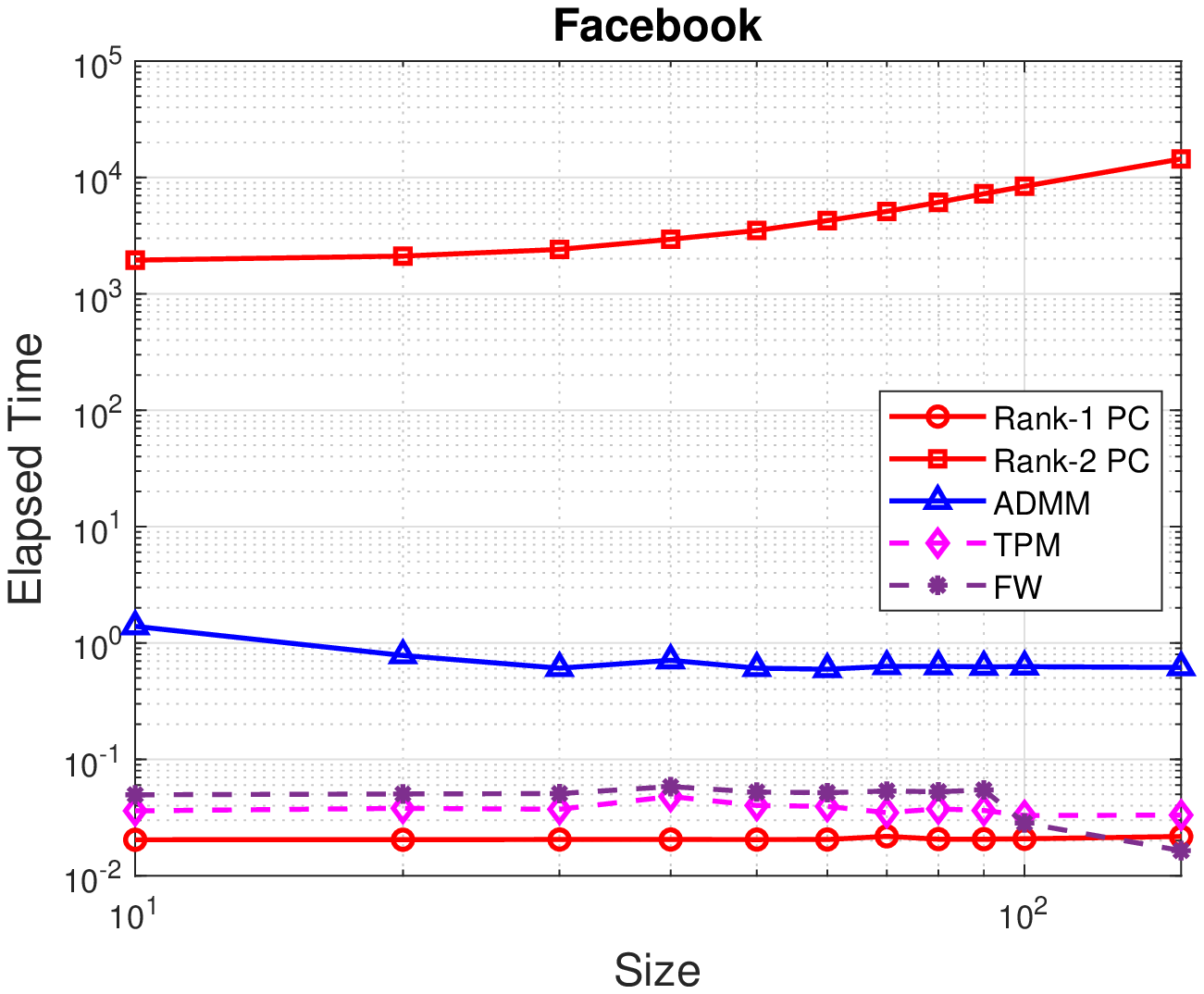}
    \includegraphics[width = 0.23\textwidth]{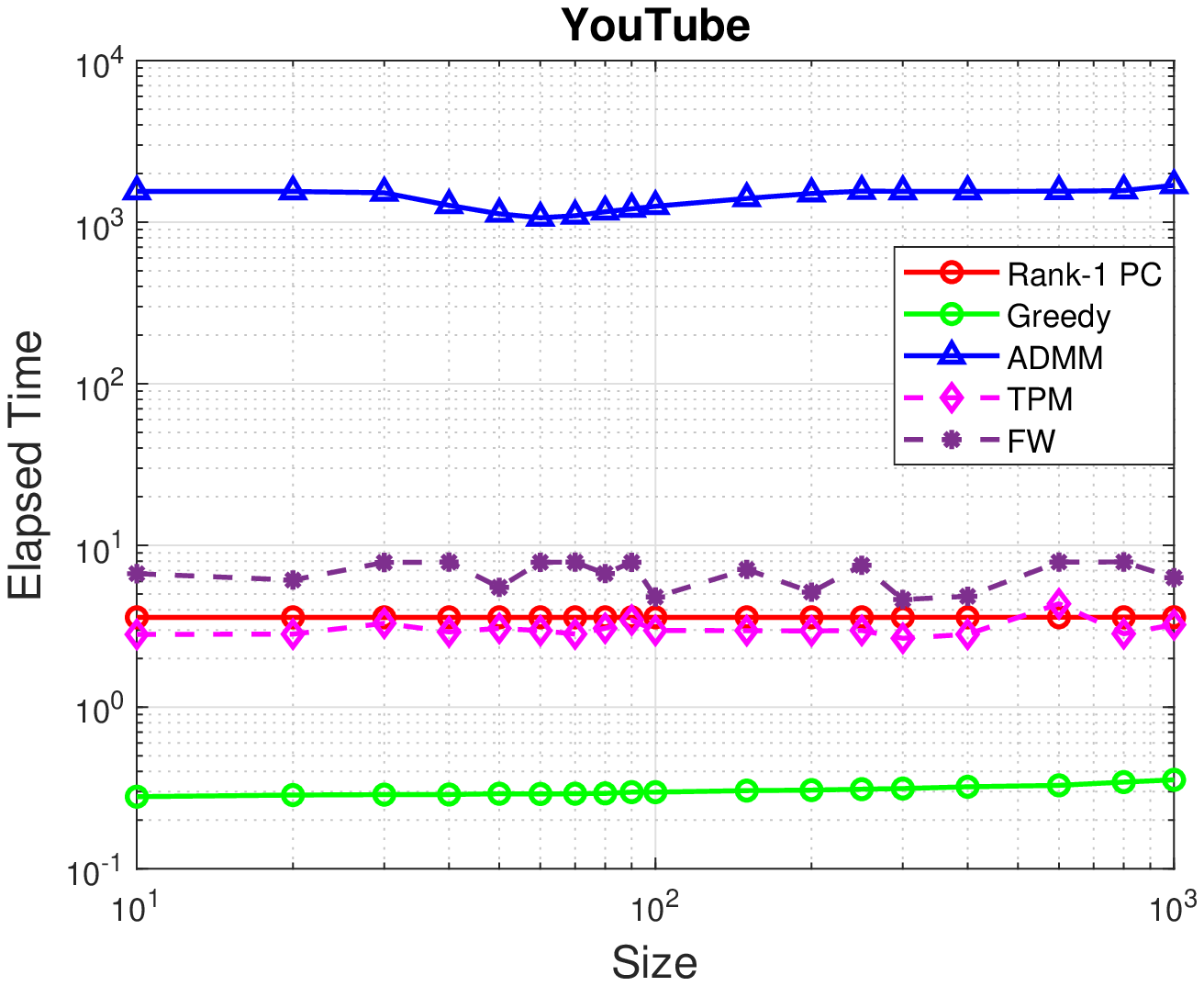}
    \includegraphics[width = 0.23\textwidth]{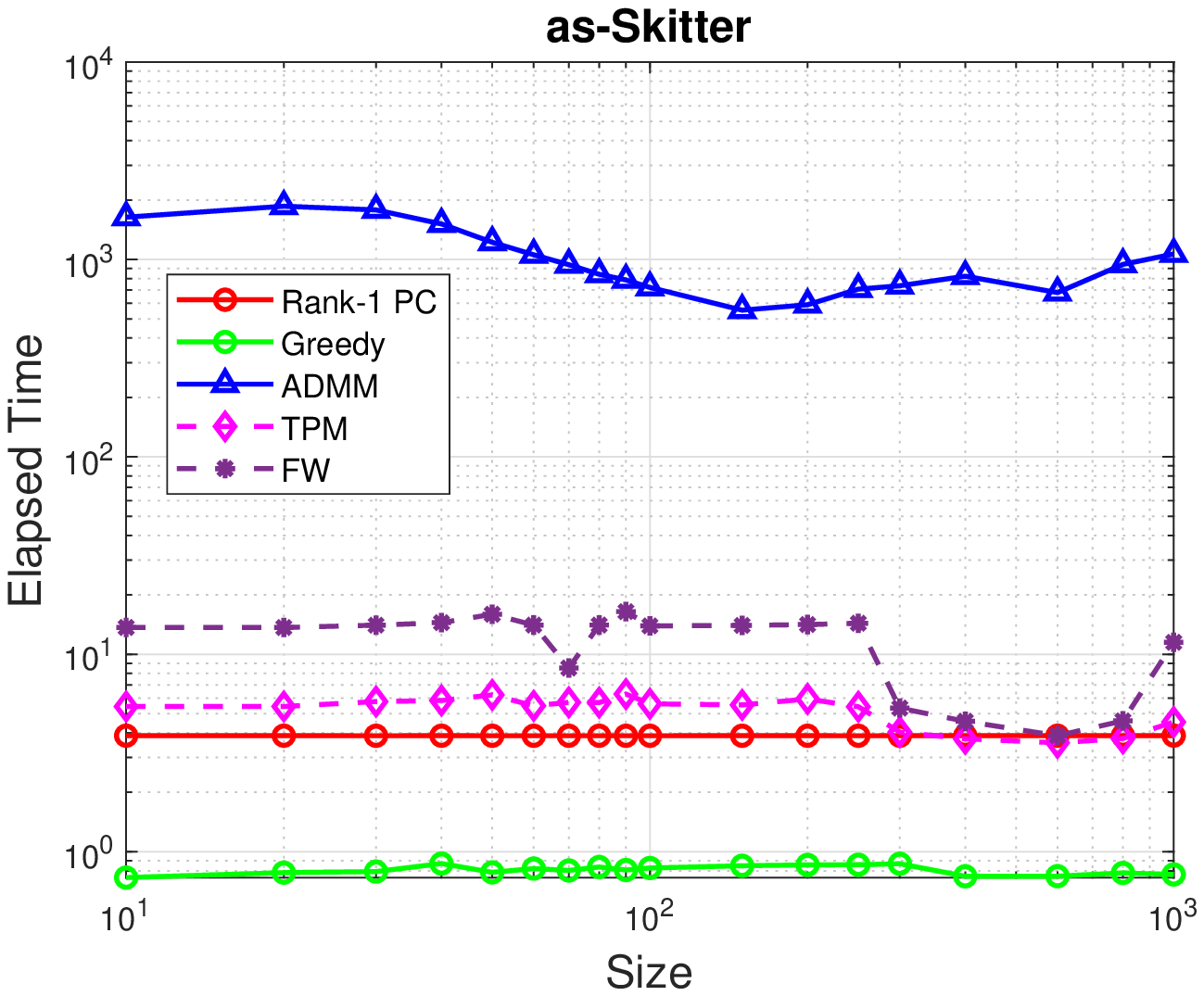}
    \includegraphics[width = 0.23\textwidth]{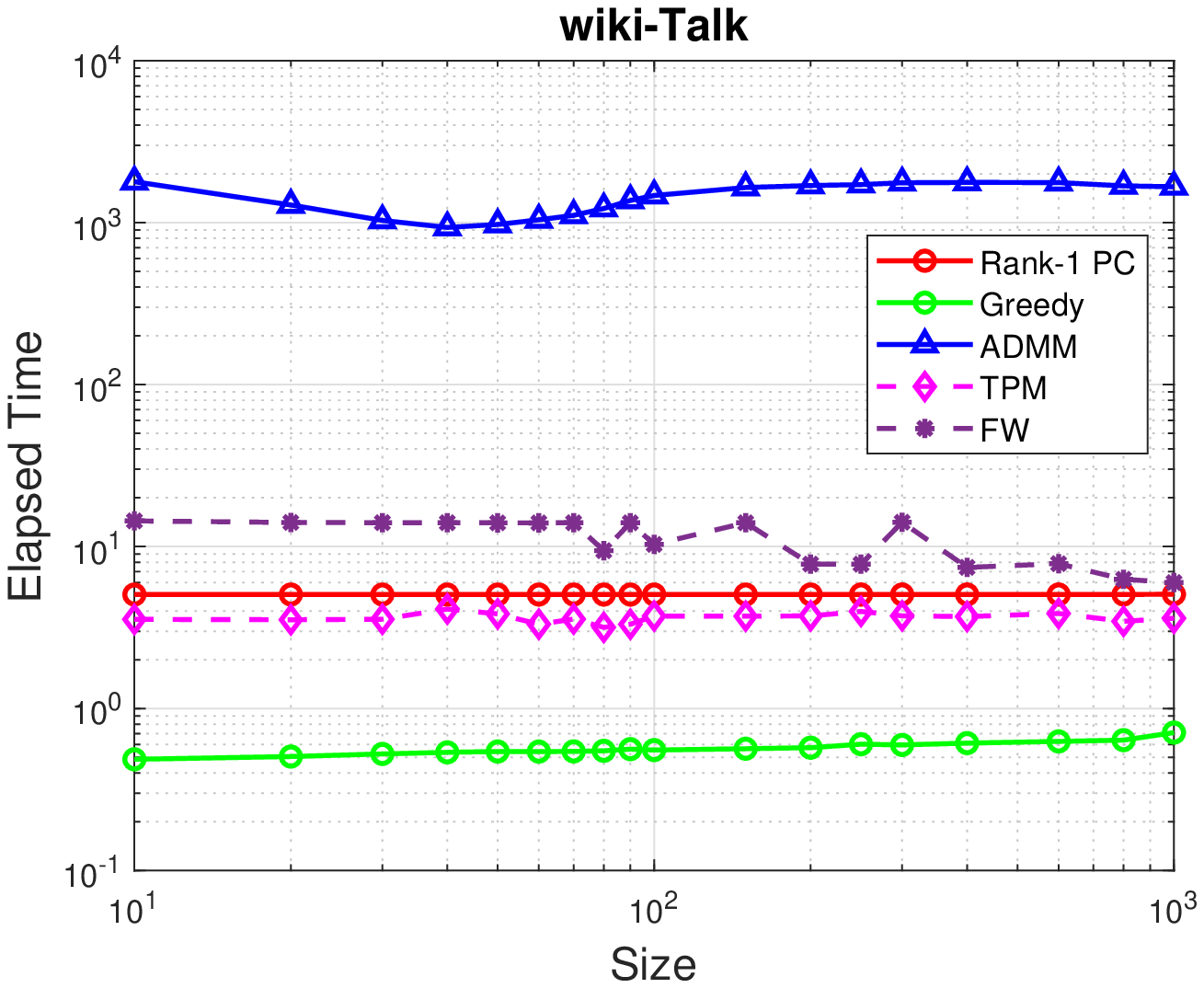}
    \caption{Runtime vs size on selected, representative datasets, owing to space constraints.}
    \label{fig:time}
\end{figure}

\section{Conclusion}

We considered the \textsc{Densest}-$k$-\textsc{Subgraph} problem (DkS), and reformulated it as minimizing a submodular cost function subject to a cardinality constraint. Adopting this viewpoint, we proposed a convex relaxation of DkS that minimizes the Lov\'asz extension of the submodular cost function over the convex hull of the cardinality constraint. While the Lov\'asz extension does not admit a closed form expression in general, we showed that for DkS it does admit an analytical form. We exploited this form to develop an efficient algorithm based on an inexact variant of the Alternating Direction Method of Multipliers (ADMM) that is capable of solving the relaxed problem at scale. After rounding the solution returned by ADMM via the proposed schemes, we conducted experiments on real-world graphs to showcase the effectiveness of our approach compared to prevailing baselines. Contrary to pessimistic worst-case results, our relaxation scheme is very effective at exploring the edge-density vs size curve in real-world graphs, yielding subgraphs that are no worse than $65-80\%$ of the optimal density. 

\section{Acknowledgements}
Supported by the National Science Foundation and the Army Research Office under Grants No. IIS-1908070 and ARO-W911NF1910407 respectively.

\bibliographystyle{ACM-Reference-Format}
\bibliography{sample-base}


\begin{thebibliography}{41}


\ifx \showCODEN    \undefined \def \showCODEN     #1{\unskip}     \fi
\ifx \showDOI      \undefined \def \showDOI       #1{#1}\fi
\ifx \showISBNx    \undefined \def \showISBNx     #1{\unskip}     \fi
\ifx \showISBNxiii \undefined \def \showISBNxiii  #1{\unskip}     \fi
\ifx \showISSN     \undefined \def \showISSN      #1{\unskip}     \fi
\ifx \showLCCN     \undefined \def \showLCCN      #1{\unskip}     \fi
\ifx \shownote     \undefined \def \shownote      #1{#1}          \fi
\ifx \showarticletitle \undefined \def \showarticletitle #1{#1}   \fi
\ifx \showURL      \undefined \def \showURL       {\relax}        \fi
\providecommand\bibfield[2]{#2}
\providecommand\bibinfo[2]{#2}
\providecommand\natexlab[1]{#1}
\providecommand\showeprint[2][]{arXiv:#2}

\bibitem[\protect\citeauthoryear{Arora, Karger, and Karpinski}{Arora
  et~al\mbox{.}}{1999}]%
        {arora1999polynomial}
\bibfield{author}{\bibinfo{person}{Sanjeev Arora}, \bibinfo{person}{David
  Karger}, {and} \bibinfo{person}{Marek Karpinski}.}
  \bibinfo{year}{1999}\natexlab{}.
\newblock \showarticletitle{Polynomial time approximation schemes for dense
  instances of NP-hard problems}.
\newblock \bibinfo{journal}{\emph{Journal of computer and system sciences}}
  \bibinfo{volume}{58}, \bibinfo{number}{1} (\bibinfo{year}{1999}),
  \bibinfo{pages}{193--210}.
\newblock


\bibitem[\protect\citeauthoryear{Bach et~al\mbox{.}}{Bach
  et~al\mbox{.}}{2013}]%
        {bach2013learning}
\bibfield{author}{\bibinfo{person}{Francis Bach} {et~al\mbox{.}}}
  \bibinfo{year}{2013}\natexlab{}.
\newblock \showarticletitle{Learning with Submodular Functions: A Convex
  Optimization Perspective}.
\newblock \bibinfo{journal}{\emph{Foundations and Trends{\textregistered} in
  Machine Learning}} \bibinfo{volume}{6}, \bibinfo{number}{2-3}
  (\bibinfo{year}{2013}), \bibinfo{pages}{145--373}.
\newblock


\bibitem[\protect\citeauthoryear{Bertsekas}{Bertsekas}{2016}]%
        {bertsekas2016}
\bibfield{author}{\bibinfo{person}{Dimitri~P Bertsekas}.}
  \bibinfo{year}{2016}\natexlab{}.
\newblock \bibinfo{booktitle}{\emph{Nonlinear Programming}}.
\newblock \bibinfo{publisher}{Athena Scientific}.
\newblock


\bibitem[\protect\citeauthoryear{Bhaskara, Charikar, Chlamtac, Feige, and
  Vijayaraghavan}{Bhaskara et~al\mbox{.}}{2010}]%
        {bhaskara2010detecting}
\bibfield{author}{\bibinfo{person}{Aditya Bhaskara}, \bibinfo{person}{Moses
  Charikar}, \bibinfo{person}{Eden Chlamtac}, \bibinfo{person}{Uriel Feige},
  {and} \bibinfo{person}{Aravindan Vijayaraghavan}.}
  \bibinfo{year}{2010}\natexlab{}.
\newblock \showarticletitle{Detecting high log-densities: an O (n $1/4$)
  approximation for densest k-subgraph}. In
  \bibinfo{booktitle}{\emph{Proceedings of the forty-second ACM symposium on
  Theory of computing}}. \bibinfo{pages}{201--210}.
\newblock


\bibitem[\protect\citeauthoryear{Bhaskara, Charikar, Guruswami, Vijayaraghavan,
  and Zhou}{Bhaskara et~al\mbox{.}}{2012}]%
        {bhaskara2012polynomial}
\bibfield{author}{\bibinfo{person}{Aditya Bhaskara}, \bibinfo{person}{Moses
  Charikar}, \bibinfo{person}{Venkatesan Guruswami}, \bibinfo{person}{Aravindan
  Vijayaraghavan}, {and} \bibinfo{person}{Yuan Zhou}.}
  \bibinfo{year}{2012}\natexlab{}.
\newblock \showarticletitle{Polynomial integrality gaps for strong sdp
  relaxations of densest k-subgraph}. In \bibinfo{booktitle}{\emph{Proceedings
  of the twenty-third annual ACM-SIAM symposium on Discrete Algorithms}}. SIAM,
  \bibinfo{pages}{388--405}.
\newblock


\bibitem[\protect\citeauthoryear{Bombina and Ames}{Bombina and Ames}{2020}]%
        {bombina2020convex}
\bibfield{author}{\bibinfo{person}{Polina Bombina} {and}
  \bibinfo{person}{Brendan Ames}.} \bibinfo{year}{2020}\natexlab{}.
\newblock \showarticletitle{Convex optimization for the densest subgraph and
  densest submatrix problems}. In \bibinfo{booktitle}{\emph{SN Operations
  Research Forum}}, Vol.~\bibinfo{volume}{1}. Springer, \bibinfo{pages}{1--24}.
\newblock


\bibitem[\protect\citeauthoryear{Boyd, Parikh, and Chu}{Boyd
  et~al\mbox{.}}{2011}]%
        {boyd2011distributed}
\bibfield{author}{\bibinfo{person}{Stephen Boyd}, \bibinfo{person}{Neal
  Parikh}, {and} \bibinfo{person}{Eric Chu}.} \bibinfo{year}{2011}\natexlab{}.
\newblock \bibinfo{booktitle}{\emph{Distributed optimization and statistical
  learning via the alternating direction method of multipliers}}.
\newblock \bibinfo{publisher}{Now Publishers Inc}.
\newblock


\bibitem[\protect\citeauthoryear{Charikar}{Charikar}{2000}]%
        {charikar2000greedy}
\bibfield{author}{\bibinfo{person}{Moses Charikar}.}
  \bibinfo{year}{2000}\natexlab{}.
\newblock \showarticletitle{Greedy approximation algorithms for finding dense
  components in a graph}. In \bibinfo{booktitle}{\emph{International Workshop
  on Approximation Algorithms for Combinatorial Optimization}}. Springer,
  \bibinfo{pages}{84--95}.
\newblock


\bibitem[\protect\citeauthoryear{Chen and Saad}{Chen and Saad}{2010}]%
        {chen2010dense}
\bibfield{author}{\bibinfo{person}{Jie Chen} {and} \bibinfo{person}{Yousef
  Saad}.} \bibinfo{year}{2010}\natexlab{}.
\newblock \showarticletitle{Dense subgraph extraction with application to
  community detection}.
\newblock \bibinfo{journal}{\emph{IEEE Transactions on knowledge and data
  engineering}} \bibinfo{volume}{24}, \bibinfo{number}{7}
  (\bibinfo{year}{2010}), \bibinfo{pages}{1216--1230}.
\newblock


\bibitem[\protect\citeauthoryear{Condat}{Condat}{2013}]%
        {condat2013primal}
\bibfield{author}{\bibinfo{person}{Laurent Condat}.}
  \bibinfo{year}{2013}\natexlab{}.
\newblock \showarticletitle{A primal--dual splitting method for convex
  optimization involving Lipschitzian, proximable and linear composite terms}.
\newblock \bibinfo{journal}{\emph{Journal of Optimization Theory and
  Applications}} \bibinfo{volume}{158}, \bibinfo{number}{2}
  (\bibinfo{year}{2013}), \bibinfo{pages}{460--479}.
\newblock


\bibitem[\protect\citeauthoryear{Edmonds}{Edmonds}{1970}]%
        {edmonds1970submodular}
\bibfield{author}{\bibinfo{person}{Jack Edmonds}.}
  \bibinfo{year}{1970}\natexlab{}.
\newblock \showarticletitle{Submodular functions, matroids, and certain
  polyhedra}.
\newblock \bibinfo{journal}{\emph{Edited by G. Goos, J. Hartmanis, and J. van
  Leeuwen}}  \bibinfo{volume}{11} (\bibinfo{year}{1970}).
\newblock


\bibitem[\protect\citeauthoryear{Feige and Langberg}{Feige and
  Langberg}{2001}]%
        {feige2001approximation}
\bibfield{author}{\bibinfo{person}{Uriel Feige} {and} \bibinfo{person}{Michael
  Langberg}.} \bibinfo{year}{2001}\natexlab{}.
\newblock \showarticletitle{Approximation algorithms for maximization problems
  arising in graph partitioning}.
\newblock \bibinfo{journal}{\emph{Journal of Algorithms}} \bibinfo{volume}{41},
  \bibinfo{number}{2} (\bibinfo{year}{2001}), \bibinfo{pages}{174--211}.
\newblock


\bibitem[\protect\citeauthoryear{Feige, Peleg, and Kortsarz}{Feige
  et~al\mbox{.}}{2001}]%
        {feige2001dense}
\bibfield{author}{\bibinfo{person}{Uriel Feige}, \bibinfo{person}{David Peleg},
  {and} \bibinfo{person}{Guy Kortsarz}.} \bibinfo{year}{2001}\natexlab{}.
\newblock \showarticletitle{The dense k-subgraph problem}.
\newblock \bibinfo{journal}{\emph{Algorithmica}} \bibinfo{volume}{29},
  \bibinfo{number}{3} (\bibinfo{year}{2001}), \bibinfo{pages}{410--421}.
\newblock


\bibitem[\protect\citeauthoryear{Frank, Wolfe, et~al\mbox{.}}{Frank
  et~al\mbox{.}}{1956}]%
        {frank1956algorithm}
\bibfield{author}{\bibinfo{person}{Marguerite Frank}, \bibinfo{person}{Philip
  Wolfe}, {et~al\mbox{.}}} \bibinfo{year}{1956}\natexlab{}.
\newblock \showarticletitle{An algorithm for quadratic programming}.
\newblock \bibinfo{journal}{\emph{Naval research logistics quarterly}}
  \bibinfo{volume}{3}, \bibinfo{number}{1-2} (\bibinfo{year}{1956}),
  \bibinfo{pages}{95--110}.
\newblock


\bibitem[\protect\citeauthoryear{Fujishige}{Fujishige}{2005}]%
        {fujishige2005submodular}
\bibfield{author}{\bibinfo{person}{Satoru Fujishige}.}
  \bibinfo{year}{2005}\natexlab{}.
\newblock \bibinfo{booktitle}{\emph{Submodular functions and optimization}}.
\newblock \bibinfo{publisher}{Elsevier}.
\newblock


\bibitem[\protect\citeauthoryear{Gallo, Grigoriadis, and Tarjan}{Gallo
  et~al\mbox{.}}{1989}]%
        {gallo1989fast}
\bibfield{author}{\bibinfo{person}{Giorgio Gallo}, \bibinfo{person}{Michael~D
  Grigoriadis}, {and} \bibinfo{person}{Robert~E Tarjan}.}
  \bibinfo{year}{1989}\natexlab{}.
\newblock \showarticletitle{A fast parametric maximum flow algorithm and
  applications}.
\newblock \bibinfo{journal}{\emph{SIAM J. Comput.}} \bibinfo{volume}{18},
  \bibinfo{number}{1} (\bibinfo{year}{1989}), \bibinfo{pages}{30--55}.
\newblock


\bibitem[\protect\citeauthoryear{Giatsidis, Malliaros, Thilikos, and
  Vazirgiannis}{Giatsidis et~al\mbox{.}}{2014}]%
        {giatsidis2014corecluster}
\bibfield{author}{\bibinfo{person}{Christos Giatsidis},
  \bibinfo{person}{Fragkiskos~D Malliaros}, \bibinfo{person}{Dimitrios~M
  Thilikos}, {and} \bibinfo{person}{Michalis Vazirgiannis}.}
  \bibinfo{year}{2014}\natexlab{}.
\newblock \showarticletitle{CoreCluster: A Degeneracy Based Graph Clustering
  Framework.}. In \bibinfo{booktitle}{\emph{AAAI}}, Vol.~\bibinfo{volume}{14}.
  \bibinfo{pages}{44--50}.
\newblock


\bibitem[\protect\citeauthoryear{Goldberg}{Goldberg}{1984}]%
        {goldberg1984finding}
\bibfield{author}{\bibinfo{person}{Andrew~V Goldberg}.}
  \bibinfo{year}{1984}\natexlab{}.
\newblock \bibinfo{booktitle}{\emph{Finding a maximum density subgraph}}.
\newblock \bibinfo{publisher}{Technical report, University of California
  Berkeley, CA}.
\newblock


\bibitem[\protect\citeauthoryear{He and Yuan}{He and Yuan}{2012}]%
        {he20121}
\bibfield{author}{\bibinfo{person}{Bingsheng He} {and}
  \bibinfo{person}{Xiaoming Yuan}.} \bibinfo{year}{2012}\natexlab{}.
\newblock \showarticletitle{On the O(1/n) Convergence Rate of the
  Douglas--Rachford Alternating Direction Method}.
\newblock \bibinfo{journal}{\emph{SIAM J. Numer. Anal.}} \bibinfo{volume}{50},
  \bibinfo{number}{2} (\bibinfo{year}{2012}), \bibinfo{pages}{700--709}.
\newblock


\bibitem[\protect\citeauthoryear{Hooi, Song, Beutel, Shah, Shin, and
  Faloutsos}{Hooi et~al\mbox{.}}{2016}]%
        {hooi2016fraudar}
\bibfield{author}{\bibinfo{person}{Bryan Hooi}, \bibinfo{person}{Hyun~Ah Song},
  \bibinfo{person}{Alex Beutel}, \bibinfo{person}{Neil Shah},
  \bibinfo{person}{Kijung Shin}, {and} \bibinfo{person}{Christos Faloutsos}.}
  \bibinfo{year}{2016}\natexlab{}.
\newblock \showarticletitle{Fraudar: Bounding graph fraud in the face of
  camouflage}. In \bibinfo{booktitle}{\emph{Proceedings of the 22nd ACM SIGKDD
  International Conference on Knowledge Discovery and Data Mining}}. ACM,
  \bibinfo{pages}{895--904}.
\newblock


\bibitem[\protect\citeauthoryear{Khot}{Khot}{2006}]%
        {khot2006ruling}
\bibfield{author}{\bibinfo{person}{Subhash Khot}.}
  \bibinfo{year}{2006}\natexlab{}.
\newblock \showarticletitle{Ruling out PTAS for graph min-bisection, dense
  k-subgraph, and bipartite clique}.
\newblock \bibinfo{journal}{\emph{SIAM J. Comput.}} \bibinfo{volume}{36},
  \bibinfo{number}{4} (\bibinfo{year}{2006}), \bibinfo{pages}{1025--1071}.
\newblock


\bibitem[\protect\citeauthoryear{Kunegis}{Kunegis}{2013}]%
        {konect}
\bibfield{author}{\bibinfo{person}{J\'{e}r\^{o}me Kunegis}.}
  \bibinfo{year}{2013}\natexlab{}.
\newblock \showarticletitle{{KONECT} -- {The} {Koblenz} {Network}
  {Collection}}. In \bibinfo{booktitle}{\emph{Proc. Int. Conf. on World Wide
  Web Companion}}. \bibinfo{pages}{1343--1350}.
\newblock
\urldef\tempurl%
\url{http://dl.acm.org/citation.cfm?id=2488173}
\showURL{%
\tempurl}


\bibitem[\protect\citeauthoryear{Leskovec and Krevl}{Leskovec and
  Krevl}{2014}]%
        {snapnets}
\bibfield{author}{\bibinfo{person}{Jure Leskovec} {and} \bibinfo{person}{Andrej
  Krevl}.} \bibinfo{year}{2014}\natexlab{}.
\newblock \bibinfo{title}{{SNAP Datasets}: {Stanford} Large Network Dataset
  Collection}.
\newblock \bibinfo{howpublished}{\url{http://snap.stanford.edu/data}}.
\newblock


\bibitem[\protect\citeauthoryear{Lions and Mercier}{Lions and Mercier}{1979}]%
        {lions1979splitting}
\bibfield{author}{\bibinfo{person}{Pierre-Louis Lions} {and}
  \bibinfo{person}{Bertrand Mercier}.} \bibinfo{year}{1979}\natexlab{}.
\newblock \showarticletitle{Splitting algorithms for the sum of two nonlinear
  operators}.
\newblock \bibinfo{journal}{\emph{SIAM J. Numer. Anal.}} \bibinfo{volume}{16},
  \bibinfo{number}{6} (\bibinfo{year}{1979}), \bibinfo{pages}{964--979}.
\newblock


\bibitem[\protect\citeauthoryear{Lov{\'a}sz}{Lov{\'a}sz}{1983}]%
        {lovasz1983submodular}
\bibfield{author}{\bibinfo{person}{L{\'a}szl{\'o} Lov{\'a}sz}.}
  \bibinfo{year}{1983}\natexlab{}.
\newblock \showarticletitle{Submodular functions and convexity}.
\newblock In \bibinfo{booktitle}{\emph{Mathematical programming the state of
  the art}}. \bibinfo{publisher}{Springer}, \bibinfo{pages}{235--257}.
\newblock


\bibitem[\protect\citeauthoryear{Luo, Ma, So, Ye, and Zhang}{Luo
  et~al\mbox{.}}{2010}]%
        {luo2010semidefinite}
\bibfield{author}{\bibinfo{person}{Zhi-Quan Luo}, \bibinfo{person}{Wing-Kin
  Ma}, \bibinfo{person}{Anthony Man-Cho So}, \bibinfo{person}{Yinyu Ye}, {and}
  \bibinfo{person}{Shuzhong Zhang}.} \bibinfo{year}{2010}\natexlab{}.
\newblock \showarticletitle{Semidefinite relaxation of quadratic optimization
  problems}.
\newblock \bibinfo{journal}{\emph{IEEE Signal Processing Magazine}}
  \bibinfo{volume}{3}, \bibinfo{number}{27} (\bibinfo{year}{2010}),
  \bibinfo{pages}{20--34}.
\newblock


\bibitem[\protect\citeauthoryear{Manurangsi}{Manurangsi}{2017}]%
        {manurangsi2017almost}
\bibfield{author}{\bibinfo{person}{Pasin Manurangsi}.}
  \bibinfo{year}{2017}\natexlab{}.
\newblock \showarticletitle{Almost-polynomial ratio ETH-hardness of
  approximating densest k-subgraph}. In \bibinfo{booktitle}{\emph{Proceedings
  of the 49th Annual ACM SIGACT Symposium on Theory of Computing}}.
  \bibinfo{pages}{954--961}.
\newblock


\bibitem[\protect\citeauthoryear{Mitzenmacher, Pachocki, Peng, Tsourakakis, and
  Xu}{Mitzenmacher et~al\mbox{.}}{2015}]%
        {mitzenmacher2015scalable}
\bibfield{author}{\bibinfo{person}{Michael Mitzenmacher},
  \bibinfo{person}{Jakub Pachocki}, \bibinfo{person}{Richard Peng},
  \bibinfo{person}{Charalampos Tsourakakis}, {and} \bibinfo{person}{Shen~Chen
  Xu}.} \bibinfo{year}{2015}\natexlab{}.
\newblock \showarticletitle{Scalable large near-clique detection in large-scale
  networks via sampling}. In \bibinfo{booktitle}{\emph{Proceedings of the 21th
  ACM SIGKDD International Conference on Knowledge Discovery and Data Mining}}.
  \bibinfo{pages}{815--824}.
\newblock


\bibitem[\protect\citeauthoryear{Nesterov}{Nesterov}{2013}]%
        {nesterov2013introductory}
\bibfield{author}{\bibinfo{person}{Yurii Nesterov}.}
  \bibinfo{year}{2013}\natexlab{}.
\newblock \bibinfo{booktitle}{\emph{Introductory lectures on convex
  optimization: A basic course}}. Vol.~\bibinfo{volume}{87}.
\newblock \bibinfo{publisher}{Springer Science \& Business Media}.
\newblock


\bibitem[\protect\citeauthoryear{Papailiopoulos, Mitliagkas, Dimakis, and
  Caramanis}{Papailiopoulos et~al\mbox{.}}{2014}]%
        {papailiopoulos2014finding}
\bibfield{author}{\bibinfo{person}{Dimitris Papailiopoulos},
  \bibinfo{person}{Ioannis Mitliagkas}, \bibinfo{person}{Alexandros Dimakis},
  {and} \bibinfo{person}{Constantine Caramanis}.}
  \bibinfo{year}{2014}\natexlab{}.
\newblock \showarticletitle{Finding dense subgraphs via low-rank bilinear
  optimization}. In \bibinfo{booktitle}{\emph{International Conference on
  Machine Learning}}. \bibinfo{pages}{1890--1898}.
\newblock


\bibitem[\protect\citeauthoryear{Parikh and Boyd}{Parikh and Boyd}{2014}]%
        {parikh2014proximal}
\bibfield{author}{\bibinfo{person}{Neal Parikh} {and} \bibinfo{person}{Stephen
  Boyd}.} \bibinfo{year}{2014}\natexlab{}.
\newblock \showarticletitle{Proximal algorithms}.
\newblock \bibinfo{journal}{\emph{Foundations and Trends in optimization}}
  \bibinfo{volume}{1}, \bibinfo{number}{3} (\bibinfo{year}{2014}),
  \bibinfo{pages}{127--239}.
\newblock


\bibitem[\protect\citeauthoryear{Podolyan and Karypis}{Podolyan and
  Karypis}{2009}]%
        {podolyan2009common}
\bibfield{author}{\bibinfo{person}{Yevgeniy Podolyan} {and}
  \bibinfo{person}{George Karypis}.} \bibinfo{year}{2009}\natexlab{}.
\newblock \showarticletitle{Common pharmacophore identification using frequent
  clique detection algorithm}.
\newblock \bibinfo{journal}{\emph{Journal of chemical information and
  modeling}} \bibinfo{volume}{49}, \bibinfo{number}{1} (\bibinfo{year}{2009}),
  \bibinfo{pages}{13--21}.
\newblock


\bibitem[\protect\citeauthoryear{Rockafellar}{Rockafellar}{1970}]%
        {rockafellar1970convex}
\bibfield{author}{\bibinfo{person}{R~Tyrrell Rockafellar}.}
  \bibinfo{year}{1970}\natexlab{}.
\newblock \bibinfo{booktitle}{\emph{Convex analysis}}.
\newblock Number~28. \bibinfo{publisher}{Princeton university press}.
\newblock


\bibitem[\protect\citeauthoryear{Saha, Hoch, Khuller, Raschid, and Zhang}{Saha
  et~al\mbox{.}}{2010}]%
        {saha2010dense}
\bibfield{author}{\bibinfo{person}{Barna Saha}, \bibinfo{person}{Allison Hoch},
  \bibinfo{person}{Samir Khuller}, \bibinfo{person}{Louiqa Raschid}, {and}
  \bibinfo{person}{Xiao-Ning Zhang}.} \bibinfo{year}{2010}\natexlab{}.
\newblock \showarticletitle{Dense subgraphs with restrictions and applications
  to gene annotation graphs}. In \bibinfo{booktitle}{\emph{Annual International
  Conference on Research in Computational Molecular Biology}}. Springer,
  \bibinfo{pages}{456--472}.
\newblock


\bibitem[\protect\citeauthoryear{Schrijver}{Schrijver}{2003}]%
        {schrijver2003combinatorial}
\bibfield{author}{\bibinfo{person}{Alexander Schrijver}.}
  \bibinfo{year}{2003}\natexlab{}.
\newblock \bibinfo{booktitle}{\emph{Combinatorial optimization: polyhedra and
  efficiency}}. Vol.~\bibinfo{volume}{24}.
\newblock \bibinfo{publisher}{Springer Science \& Business Media}.
\newblock


\bibitem[\protect\citeauthoryear{Srivastav and Wolf}{Srivastav and
  Wolf}{1998}]%
        {srivastav1998finding}
\bibfield{author}{\bibinfo{person}{Anand Srivastav} {and}
  \bibinfo{person}{Katja Wolf}.} \bibinfo{year}{1998}\natexlab{}.
\newblock \showarticletitle{Finding dense subgraphs with semidefinite
  programming}. In \bibinfo{booktitle}{\emph{International Workshop on
  Approximation Algorithms for Combinatorial Optimization}}. Springer,
  \bibinfo{pages}{181--191}.
\newblock


\bibitem[\protect\citeauthoryear{Tsourakakis}{Tsourakakis}{2015}]%
        {tsourakakis2015k}
\bibfield{author}{\bibinfo{person}{Charalampos Tsourakakis}.}
  \bibinfo{year}{2015}\natexlab{}.
\newblock \showarticletitle{The k-clique densest subgraph problem}. In
  \bibinfo{booktitle}{\emph{Proceedings of the 24th International Conference on
  World Wide Web}}. International World Wide Web Conferences Steering
  Committee, \bibinfo{pages}{1122--1132}.
\newblock


\bibitem[\protect\citeauthoryear{Watts and Strogatz}{Watts and
  Strogatz}{1998}]%
        {watts1998collective}
\bibfield{author}{\bibinfo{person}{Duncan~J Watts} {and}
  \bibinfo{person}{Steven~H Strogatz}.} \bibinfo{year}{1998}\natexlab{}.
\newblock \showarticletitle{Collective dynamics of `small-world' networks}.
\newblock \bibinfo{journal}{\emph{Nature}} \bibinfo{volume}{393},
  \bibinfo{number}{6684} (\bibinfo{year}{1998}), \bibinfo{pages}{440}.
\newblock


\bibitem[\protect\citeauthoryear{Yuan and Zhang}{Yuan and Zhang}{2013}]%
        {yuan2013truncated}
\bibfield{author}{\bibinfo{person}{Xiao-Tong Yuan} {and} \bibinfo{person}{Tong
  Zhang}.} \bibinfo{year}{2013}\natexlab{}.
\newblock \showarticletitle{Truncated power method for sparse eigenvalue
  problems}.
\newblock \bibinfo{journal}{\emph{Journal of Machine Learning Research}}
  \bibinfo{volume}{14}, \bibinfo{number}{Apr} (\bibinfo{year}{2013}),
  \bibinfo{pages}{899--925}.
\newblock


\bibitem[\protect\citeauthoryear{Zhang, Zhou, Yildirim, Alcorn, He, Davulcu,
  and Tong}{Zhang et~al\mbox{.}}{2017}]%
        {zhang2017hidden}
\bibfield{author}{\bibinfo{person}{Si Zhang}, \bibinfo{person}{Dawei Zhou},
  \bibinfo{person}{Mehmet~Yigit Yildirim}, \bibinfo{person}{Scott Alcorn},
  \bibinfo{person}{Jingrui He}, \bibinfo{person}{Hasan Davulcu}, {and}
  \bibinfo{person}{Hanghang Tong}.} \bibinfo{year}{2017}\natexlab{}.
\newblock \showarticletitle{Hidden: hierarchical dense subgraph detection with
  application to financial fraud detection}. In
  \bibinfo{booktitle}{\emph{Proceedings of the 2017 SIAM International
  Conference on Data Mining}}. SIAM, \bibinfo{pages}{570--578}.
\newblock


\bibitem[\protect\citeauthoryear{Zhang and Parthasarathy}{Zhang and
  Parthasarathy}{2012}]%
        {zhang2012extracting}
\bibfield{author}{\bibinfo{person}{Yang Zhang} {and}
  \bibinfo{person}{Srinivasan Parthasarathy}.} \bibinfo{year}{2012}\natexlab{}.
\newblock \showarticletitle{Extracting analyzing and visualizing triangle
  k-core motifs within networks}. In \bibinfo{booktitle}{\emph{2012 IEEE 28th
  International Conference on Data Engineering}}. IEEE,
  \bibinfo{pages}{1049--1060}.
\newblock


\end{thebibliography}

\end{document}